\newtheorem{observation}{Observation} 
\newtheorem{clm}{Claim} 
\newcommand{\evc}[0]{\operatorname{evc}}
\newcommand{\mvc}[0]{\operatorname{mvc}}
\newcommand{\cvc}[0]{\operatorname{cvc}}
\newcommand\Osq{\mathbin{\text{\scalebox{.84}{$\square$}}}}
\title{On Graphs whose Eternal Vertex Cover Number and Vertex Cover Number Coincide}
\titlerunning{On Graphs whose Eternal Vertex Cover Number and Vertex Cover Number Coincide} 
\author{Jasine Babu\inst{1} \and L. Sunil Chandran \inst{2} \and Mathew Francis \inst{3} \and Veena Prabhakaran \inst{1} \and Deepak Rajendraprasad \inst{1} \and J. Nandini Warrier \inst{4}}
\institute{Indian Institute of Technology Palakkad, India \email{jasine@iitpkd.ac.in,111704003@smail.iitpkd.ac.in,deepak@iitpkd.ac.in}
\and Indian Institute of Science, Bangalore, India \email{sunil@csa.iisc.ernet.in}
\and Indian Statistical Institute, Chennai, India \email{mathew@isichennai.res.in}
\and National Institute of Technology Calicut, India \email{nandini.wj@gmail.com}}
\authorrunning{J.\,Babu, L.\,S.\.Chandran, M.\.Francis, V.\.Prabhakaran, D.\.Rajendraprasad  and J.\.N.\.Warrier} 
\begin{document}
\maketitle
\begin{abstract}
The eternal vertex cover problem is a variant of the classical vertex cover problem where 
a set of guards on the vertices have to be dynamically reconfigured from one vertex cover to another in every round of an attacker-defender game. 
The minimum number of guards required to protect a graph $G$ from 
an infinite sequence of attacks is the eternal vertex cover number of $G$, denoted by $\evc(G)$. 
It is known that, given a graph $G$ and an integer $k$, checking whether $\evc(G) \le k$ is NP-hard. 
However, it is unknown whether this problem is in NP or not. 
Precise value of eternal vertex cover number is known only for certain very basic graph classes like trees, cycles and grids. 

For any graph $G$, it is known that $\mvc(G) \le \evc(G) \le 2 \mvc(G)$, 
where $\mvc(G)$ is the minimum vertex cover number of $G$.
Though a characterization is known for graphs for which  $\evc(G) = 2\mvc(G)$, a characterization of graphs 
for which $\evc(G) = \mvc(G)$ remained open. Here, we achieve such a characterization for a class of graphs that includes chordal graphs and 
internally triangulated planar graphs. For some graph classes including biconnected chordal graphs, 
our characterization leads to a polynomial time algorithm to precisely determine $\evc(G)$ and to determine a safe strategy of guard movement 
in each round of the game with $\evc(G)$ guards. 

The characterization also leads to NP-completeness results for the eternal vertex cover problem for some graph classes including
biconnected internally triangulated planar graphs. To the best of our knowledge, these are the first NP-completeness results known for the 
problem for any graph class. 
\footnote{An initial version containing some results included in this paper appeared in CALDAM 2019 \cite{BabuCFPRW19}.}
\end{abstract}
\keywords{Eternal Vertex Cover, Chordal Graphs, Internally Triangulated Planar Graphs, Locally Connected Graphs}
\section{Introduction}\label{sec:intro}
A vertex cover of a graph $G(V, E)$ is a subset $S \subseteq V$ such that for every edge in $E$, at least one of its endpoints is in $S$.
A minimum vertex cover of $G$ is a vertex cover of $G$ of minimum cardinality and its cardinality is the minimum vertex cover number of $G$, denoted by $\mvc(G)$.
Equivalently, if we imagine that a guard placed on a vertex $v$ can monitor all edges incident at $v$,
then $\mvc(G)$ is the minimum number of guards required to ensure that all edges of $G$ are monitored.

The eternal vertex cover problem is an extension of the above formulation in the context of a multi-round game, where mobile guards placed on a subset
of vertices of $G$ are trying to protect the edges of $G$ from an attacker.
This problem was first introduced by Klostermeyer and Mynhardt \cite{Klostermeyer2009}. We focus on a well studied variant of the game 
in which more than one guard occupying a vertex simultaneously is disallowed at any point of time. 
Guards are initially placed by the defender on some vertices, with at most one guard per vertex. 
The total number of guards on vertices remain the same throughout the game. 
In each round of the game, the attacker chooses an edge to attack. In response, the defender has to move at least one guard across the attacked edge. 
Other guards can either remain in their current locations or move to an adjacent vertex.
The movement of all guards in a round is assumed to happen in parallel. 
Then the game proceeds to the next round of attack-defense. The defender wins if any sequence of attacks can be defended.
If an attack cannot be defended in some round, the attacker wins.

If $\mathcal{C}$ is a family of vertex covers of $G$ of the same cardinality, such that the defender can choose any vertex cover from $\mathcal{C}$ as the starting configuration
and successfully keep on defending attacks forever by moving among configurations in $\mathcal{C}$ itself, then $\mathcal{C}$ is an \textit{eternal vertex cover class} 
of $G$ and each vertex cover in $\mathcal{C}$ is an \textit{eternal vertex cover} of $G$. If $S$ is an eternal vertex cover belonging to an eternal vertex cover class $\mathcal{C}$, 
we say that $S$ is a configuration in $\mathcal{C}$. Eternal vertex cover number of $G$ is the minimum cardinality of an eternal vertex cover of $G$.

Clearly, if the vertices occupied by the guards do not form a vertex cover at the beginning of each round, 
there is an attack which cannot be defended, namely an attack on an edge that has no guards on its end points. 
Therefore, it is easy to see that for any graph $G$, $\mvc(G) \le \evc(G)$.


Klostermeyer and Mynhardt \cite{Klostermeyer2009} showed that, for $C_n$, a cycle on $n$ vertices with $n\ge3$, $\evc(C_n)=\mvc(C_n)=\lceil{\frac{n}{2}}\rceil$ and for 
any tree on $n$ vertices with $n\ge2$, eternal vertex cover number is one more than its number of internal vertices. In particular, for a path on an odd number of vertices,
its eternal vertex cover number is twice its vertex cover number. 
They also showed that, for any graph $G$, $\mvc(G) \le \evc(G) \le 2 \mvc(G)$. 
From the examples of cycles and paths, it is clear that even for bipartite graphs, both the lower bound and upper bound mentioned above are tight.

Fomin et al.~\cite{Fomin2010} discusses the computational complexity and derives some algorithmic results for the eternal 
vertex cover problem. They use a variant of the eternal vertex cover problem in which more than one guard can be placed 
on a single vertex. They showed that given a graph $G(V,E)$ and an integer $k$, it 
is NP-hard to decide whether $\evc(G) \le k$. 
The paper gave an exact algorithm with $2^{O(n)}$ time complexity and exponential space complexity and
gave an FPT algorithm to solve the eternal 
vertex cover problem,  with eternal vertex cover number as the parameter. 
They also describe a simple polynomial time 2-factor approximation algorithm for the eternal vertex cover problem, using maximum matchings. 
The above results can be carried forward (with minor modifications in proofs) to the original model 
which allows at most one guard per vertex. It is not yet known whether the decision problem is in NP, though it is known that the 
problem is in PSPACE \cite{Fomin2010}. 
It is also unknown whether the eternal vertex cover problem for bipartite graphs is NP-hard. 
Some related graph parameters based on multi-round attacker-defender games and their relationship with eternal vertex cover 
number were investigated by Anderson et al.~\cite {Anderson2014} and 
Klostermeyer et al.~\cite{Klostermeyer2011}.

Klostermeyer and Mynhardt \cite{Klostermeyer2009} gave a characterization for graphs $G$ which have $\evc(G)=2\mvc(G)$. 
The characterization follows a nontrivial constructive method starting from any tree $T$ which requires $2\mvc(T)$ guards to protect it.
They also give a few examples of graphs $G$ for which $\evc(G)=\mvc(G)$ such as complete graph on $n$ 
vertices ($K_n$), Petersen graph, $K_m \Osq K_n$, $C_m \Osq C_n$ (where $\Osq$ represents the box product) and $n \times m$ grid, if $n$ or $m$ is even. 
However, they mention that an elegant characterization of graphs for which $\evc(G)=\mvc(G)$ seems to be difficult.  

Here, we achieve such a characterization that works for a class of graphs $\mathcal{F}$, that includes locally connected graphs, chordal graphs and
internally triangulated planar graphs. Without loss of generality, we only consider
connected graphs for this characterization. 
The graph class $\mathcal{F}$ consists precisely of all graphs $G$ for which each minimum vertex cover of $G$ that contains all its cut vertices
induces a connected subgraph in $G$. 
The characterization is simple to state (see Theorem~\ref{thm:evc-cut-vertices-char}) and can be used to show that given a connected graph $G$ 
for which every minimum vertex cover is connected, deciding whether $\evc(G)\le k$ is in NP. 
Further, if $\mathcal{H}$ is a hereditary subclass of $\mathcal{F}$ for which
minimum vertex cover computation can be done in polynomial time, then using our characterization, it can be shown that given a graph $G$ in $\mathcal{H}$,
deciding whether $\evc(G)=\mvc(G)$ can be done in polynomial time. 
It can also be shown that given a biconnected graph $G$ in $\mathcal{H}$, $\evc(G)$ can be computed in polynomial time.

In particular, our characterization has the following implications:
\begin{itemize}
\item For chordal graphs, deciding whether $\evc(G)=\mvc(G)$ can be done in polynomial time. If the parameters are equal, then a safe strategy of guard movement 
in each round of the game, with $\evc(G)$ guards, can be determined in polynomial time. 
\item For biconnected chordal graphs, $\evc(G)$ can be computed in polynomial time. Further, a safe strategy of guard movement 
in each round of the game, with $\evc(G)$ guards, can be determined in polynomial time. 
\item For internally triangulated planar graphs, deciding whether $\evc(G)=\mvc(G)$ is in $P^{NP}$.
\item Deciding whether $\evc(G) \le k$ is in NP for locally connected graphs, a graph class that includes the class of biconnected internally triangulated 
planar graphs. (A graph is locally connected, if the open neighborhood of each vertex induces a connected subgraph.)
\end{itemize}
Other results included in this paper are the following:
\begin{itemize}
 \item Deciding whether $\evc(G) \le k$ is NP-complete for locally connected graphs and biconnected internally triangulated planar graphs. 
 To the best of our knowledge, these are the first NP-completeness results known for the 
problem for any graph class. Various NP-hardness and approximation hardness results obtained 
 are summarized in Figure~\ref{fig:complexity}.
 \item Klostermeyer and Mynhardt \cite{Klostermeyer2009} had posed a question whether it is necessary for every edge $e$ of $G$ to be present in some maximum matching, to satisfy $\evc(G)=\mvc(G)$. 
We present an example which answers this question in negative. 
\end{itemize}
\section{A necessary condition for $\evc(G) = \mvc(G)$}
In this section, we derive some necessary conditions for a graph $G$ to have $\evc(G) = \mvc(G)$. 
The following is an easy observation. 
\begin{observation}\label{obs:obvious}
Let $G(V, E)$ be a connected graph with at least two vertices. If $\evc(G)=\mvc(G)$, then for every vertex $v \in V$, $G$ has a minimum vertex cover $S_v$ containing $v$.
\end{observation}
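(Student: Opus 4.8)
The plan is to exploit the hypothesis $\evc(G) = \mvc(G)$ to obtain an eternal vertex cover class whose configurations are all \emph{minimum} vertex covers, and then to force, by a single well-chosen attack, a guard to traverse an edge incident to the target vertex $v$. First I would fix an eternal vertex cover class $\mathcal{C}$ witnessing $\evc(G) = \mvc(G)$; that is, every configuration in $\mathcal{C}$ consists of exactly $\mvc(G)$ guards, one per vertex. Since each such configuration is a vertex cover of cardinality $\mvc(G)$, it is in fact a minimum vertex cover. Hence it suffices to exhibit, for a given $v$, some configuration in $\mathcal{C}$ that contains $v$.

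Next I would fix an arbitrary vertex $v \in V$. Because $G$ is connected and has at least two vertices, $v$ has a neighbor $u$, so $uv \in E$. Let the defender start in any configuration $S \in \mathcal{C}$ and let the attacker attack the edge $uv$. As $\mathcal{C}$ is an eternal vertex cover class, the defender responds by moving among configurations of $\mathcal{C}$, reaching some $S' \in \mathcal{C}$ after moving at least one guard across $uv$. I would then do a short case analysis on the guard that crosses $uv$: either it occupies $u$ in $S$ and $v$ in $S'$, in which case $v \in S'$, or it occupies $v$ in $S$ and $u$ in $S'$, in which case $v \in S$. In both cases some configuration of $\mathcal{C}$, and therefore a minimum vertex cover, contains $v$; taking that configuration as $S_v$ completes the argument.

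The main point requiring care is the definitional layer rather than any combinatorial difficulty. I must confirm that the rules of the game genuinely force a guard to traverse the attacked edge (so that one of the two endpoints necessarily carries a guard before or after the move), and that every configuration the defender uses within $\mathcal{C}$ has exactly $\mvc(G)$ guards, so that it is a \emph{minimum} vertex cover rather than merely some vertex cover. Once these two facts are pinned down, the conclusion is immediate and no further machinery is needed.
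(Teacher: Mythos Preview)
Your proposal is correct and follows essentially the same approach as the paper: fix a minimum eternal vertex cover class $\mathcal{C}$, attack an edge incident to $v$, and conclude that some configuration of $\mathcal{C}$ (either before or after the attack) contains $v$. The paper phrases it as ``if a configuration has no guard on $v$, then after an attack on an edge adjacent to $v$ the next configuration must contain $v$,'' which collapses your two cases into one, but the content is identical.
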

\begin{proof}
 Suppose $\evc(G)=\mvc(G)$ and $\mathcal{C}$ is an eternal vertex cover class of $G$ in which each configuration is a vertex cover with exactly $\mvc(G)$ vertices.
 Consider any vertex $v \in V$. If a configuration in $\mathcal{C}$ has no guard on $v$, then following an attack on an edge adjacent to $v$, the next configuration
 should have a guard on $v$, to successfully defend the attack. Since vertex cover in every configuration of $\mathcal{C}$ is a minimum vertex cover, the observation follows. 
\hfill\qed
 \end{proof}
It is easy to see that the simple necessary condition stated above is not sufficient for many graphs. 
For a path $P_n$ on $n$ vertices, where $n>2$ is an even number, each vertex belong to some minimum vertex cover; but still 
$\frac{n}{2}=\mvc(G)<\evc(G)=n-1$. In fact, among graphs which are not biconnected, it is easy to find several such examples.
Therefore, we generalize Observation~\ref{obs:obvious} to get a stronger necessary condition for a graph $G$ with cut vertices.\\

\noindent We first introduce some notations. 
\begin{definition}
 For any subset of vertices $U$ of a graph $G$, we define $\evc_{_{U}}(G)$ 
   as the minimum integer $k$ such that $G$ has an eternal vertex cover class
 $\mathcal{C}$ in which every configuration is a vertex cover of cardinality $k$ that contains all vertices in $U$. 
 We define $\mvc_{_{U}}(G)$ as the minimum cardinality of a vertex cover of $G$ that contains all vertices of $U$.
\end{definition}
Note that when $U=\emptyset$, $\mvc_{_{U}}(G)=\mvc(G)$ and $\evc_{_{U}}(G)=\evc(G)$. 
The following is an easy generalization of Observation~\ref{obs:obvious}.
\begin{observation}\label{obs:evc-subset}
 Let $G(V, E)$ be a connected graph with at least two vertices and $U \subseteq V$. If $\evc_{_{U}}(G) = \mvc_{_{U}}(G)$, then
for every vertex $v \in V \setminus U$, $\mvc_{_{U \cup \{v\}}}(G) = \mvc_{_{U}}(G)$. 
\end{observation}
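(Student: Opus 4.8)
The plan is to mirror the proof of Observation~\ref{obs:obvious}, carrying the membership constraint $U$ through every configuration of the game. First I would set $k=\evc_{_{U}}(G)=\mvc_{_{U}}(G)$ and let $\mathcal{C}$ be an eternal vertex cover class witnessing $\evc_{_{U}}(G)=k$; by definition, every configuration in $\mathcal{C}$ is a vertex cover of cardinality exactly $k$ that contains all of $U$. Fixing a vertex $v\in V\setminus U$, the inequality $\mvc_{_{U\cup\{v\}}}(G)\ge \mvc_{_{U}}(G)=k$ is immediate, since any vertex cover containing $U\cup\{v\}$ in particular contains $U$. Thus the entire task reduces to exhibiting a single vertex cover of cardinality $k$ that contains $U\cup\{v\}$.

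To produce such a cover, I would run the defender's game on $\mathcal{C}$. Pick any configuration $S\in\mathcal{C}$. If $v\in S$, we are already done, since $S$ is then a vertex cover of cardinality $k$ containing $U\cup\{v\}$. Otherwise $v\notin S$; because $G$ is connected with at least two vertices, $v$ has an incident edge $e$, and since $S$ is a vertex cover, the other endpoint of $e$ carries a guard. I would then let the attacker attack $e$. The rules of the game force at least one guard across $e$, and as $v$ was unguarded, this guard must move onto $v$, so the resulting configuration $S'$ satisfies $v\in S'$.

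The one point that needs care is that $S'$ is itself a configuration in $\mathcal{C}$, because the defender must respond by moving among configurations of the class. Consequently $S'$ inherits both defining properties of $\mathcal{C}$: it is a vertex cover of cardinality exactly $k$, and it contains all of $U$. Together with $v\in S'$, this makes $S'$ a vertex cover of cardinality $k$ containing $U\cup\{v\}$, so $\mvc_{_{U\cup\{v\}}}(G)\le k$. Combined with the trivial lower bound, this gives $\mvc_{_{U\cup\{v\}}}(G)=k=\mvc_{_{U}}(G)$, as desired. I do not anticipate a genuine obstacle: the argument is a routine lift of Observation~\ref{obs:obvious}, and the only subtlety is confirming that the $U$-constraint and the cardinality bound survive the attack, which is precisely what membership of the post-attack configuration in the eternal vertex cover class $\mathcal{C}$ guarantees.
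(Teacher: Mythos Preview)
Your proposal is correct and follows exactly the approach the paper indicates: it is the straightforward generalization of the proof of Observation~\ref{obs:obvious}, carrying the constraint $U\subseteq S$ through every configuration of the eternal vertex cover class. The paper itself does not spell out the argument but merely states that such a generalization works, so your write-up is in fact more detailed than what appears there.
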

It is straightforward to obtain a proof of the above observation, by generalizing the proof of Observation~\ref{obs:obvious}.

Next lemma shows that for a graph $G$ for which vertex cover number and eternal vertex cover number coincide, these parameters
also coincide with $\evc_{_{X}}(G)$, where $X$ is the set of cut vertices of $G$. 
\begin{lemma}\label{lem:evc-cut-vertices}
  Let $G(V, E)$ be any connected graph. Let $X \subseteq V$ be the set of cut vertices of $G$. 
 If $\evc(G) = \mvc(G)$, then for any minimum eternal vertex cover class $\mathcal{C}$ of $G$, each 
 configuration of  $\mathcal{C}$ is a vertex cover containing all vertices of $X$. Consequently, $\evc_{_{X}}(G)=\mvc_{_{X}}(G)=\evc(G)=\mvc(G)$.  
\end{lemma}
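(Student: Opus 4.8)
The plan is to prove the core statement first---that in any minimum eternal vertex cover class $\mathcal{C}$ every configuration contains every cut vertex---and then read off the chain of equalities by monotonicity. So I would suppose, for contradiction, that some configuration $S \in \mathcal{C}$ omits a cut vertex $v$. Each configuration is a vertex cover of size $\evc(G) = \mvc(G)$, so since $v \notin S$ every neighbor of $v$ lies in $S$. Let $H_1, \dots, H_r$ (with $r \ge 2$) be the components of $G - v$ and set $G_i = G[V(H_i) \cup \{v\}]$. Because $v$ is a cut vertex these branch subgraphs pairwise meet only in $v$, so $S$ splits as the disjoint union of the sets $S \cap V(H_i)$, each of which is a vertex cover of $G_i$ that avoids $v$.

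The first step is a short counting argument to locate a branch in which occupying $v$ is strictly wasteful. For each $i$ let $a_i$ be the minimum size of a vertex cover of $G_i$ that excludes $v$, and let $b_i = \mvc_{\{v\}}(G_i) - 1$ be the number of non-$v$ vertices in a minimum vertex cover of $G_i$ that includes $v$; adding $v$ to an optimal $v$-avoiding cover shows $b_i \le a_i$. Since $S$ avoids $v$ and is minimum, $\mvc(G) = \sum_i |S \cap V(H_i)| \ge \sum_i a_i$, and as the union of optimal $v$-avoiding covers is a cover of $G$, this is an equality with $|S \cap V(H_i)| = a_i$ for every $i$. On the other hand, $\{v\}$ together with branchwise minimum $v$-covers is a cover of $G$ of size $1 + \sum_i b_i$, so $\sum_i a_i = \mvc(G) \le 1 + \sum_i b_i$, i.e. $\sum_i (a_i - b_i) \le 1$. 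As each term is nonnegative and $r \ge 2$, some branch $j$ must satisfy $a_j = b_j$.

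The crux is the attack argument. I would pick a neighbor $u$ of $v$ inside $H_j$ and attack the edge $uv$. Because $v$ is unoccupied, the only way to send a guard across $uv$ is to move the guard on $u$ onto $v$; and since the branches communicate with the rest of the graph only through $v$, and $v$ started empty, no guard can enter $H_j$ during this round while exactly one guard leaves it for $v$. Hence any legal response $S'$ has $v$ occupied and only $a_j - 1$ guards inside $V(H_j)$, so $S' \cap (V(H_j) \cup \{v\})$ would have to be a vertex cover of $G_j$ containing $v$ of size $a_j = b_j < b_j + 1 = \mvc_{\{v\}}(G_j)$, which is too small to cover $G_j$. Thus no response keeps $S'$ a vertex cover, contradicting that $\mathcal{C}$ defends every attack. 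I expect this step---pinning down that the forced, \emph{parallel} response leaves the chosen branch exactly one guard short, with nothing able to flow back in through $v$ in the same round---to be the main obstacle, and it is exactly where the cut-vertex hypothesis is used.

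Finally, the displayed equalities follow by monotonicity. The class $\mathcal{C}$, now known to keep all of $X$ occupied in configurations of size $\evc(G)$, witnesses $\evc_{_X}(G) \le \evc(G)$; conversely, dropping the constraint gives $\evc(G) \le \evc_{_X}(G)$, and every configuration of an $X$-constrained class is a vertex cover containing $X$, so $\mvc(G) \le \mvc_{_X}(G) \le \evc_{_X}(G)$. Combining these with the hypothesis $\evc(G) = \mvc(G)$ squeezes all four quantities together, yielding $\evc_{_X}(G) = \mvc_{_X}(G) = \evc(G) = \mvc(G)$.
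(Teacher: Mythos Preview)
Your proof is correct and follows essentially the same approach as the paper's: both locate a branch $G_j$ at the cut vertex in which every minimum vertex cover of $G_j$ must avoid $v$ (your condition $a_j=b_j$ is equivalent to $\mvc_{\{v\}}(G_j)>\mvc(G_j)$), and then attack an edge $uv$ inside that branch to force a configuration that is one guard short there. The only cosmetic difference is that the paper groups $G\setminus v$ into two pieces and argues by a case split on whether $\mvc(G)=k_1+k_2-1$ or $k_1+k_2$, whereas you keep all $r$ components and use the cleaner inequality $\sum_i(a_i-b_i)\le 1$ to single out the tight branch.
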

\begin{proof}
Suppose $\evc(G)=\mvc(G)=k$.
If $X=\emptyset$, the result holds trivially. If $X \neq \emptyset$, we will show that in any minimum eternal vertex cover class $\mathcal{C}$ 
of $G$, all cut vertices of $G$ have to be occupied with guards in all configurations.

Let $x$ be any cut vertex of $G$. Let $H$ be a connected component of $G \setminus x$, 
$H_1=G[V(H) \cup \{x\}]$ and $H_2=G[V \setminus V(H)]$. Note that $H_1$ and $H_2$ are edge-disjoint subgraphs of $G$ with $x$ 
being their only common vertex. Let $k_1=\mvc(H_1)$ and $k_2=\mvc(H_2)$. 
It is easy to see that $k=\mvc(G) \in \{k_1+k_2-1, k_1+k_2\}$. Since $\evc(G) = \mvc(G)$, there must be a vertex cover configuration $S$ in the eternal 
vertex cover class $\mathcal{C}$ such that $x\in S$. Either $|S \cap V(H_1)|=k_1$ or  $|S \cap V(H_2)|=k_2$ or both.
If both $|S \cap V(H_1)|=k_1$ and $|S \cap V(H_2)|=k_2$, then $k=k_1+k_2-1$ and $G$ has no minimum vertex covers without $x$.
This would immediately imply that in every configuration of $\mathcal{C}$, $x$ is occupied by a guard. 

Therefore, without loss of generality, we need to consider the only case when $H_1$ has no minimum vertex cover containing $x$. 
If $x$ is not occupied by a guard in a configuration $S' \in \mathcal{C}$, we must have $|S' \cap V(H_1)|=k_1$ and  $|S' \cap V(H_2)|=k_2$. 
In this configuration, consider an attack on an edge $ux$ in $H_1$. A guard must move to $x$ from $u$. 
This is impossible because $H_1$ had only $k_1$ guards it has no vertex cover containing $x$ of size $k_1$. 
Hence, in this case also, $x$ is occupied by a guard in every configuration of $\mathcal{C}$.

Since $x$ was an arbitrary chosen cut vertex, this implies that all vertices of $X$ must be occupied in all configurations of the
eternal vertex cover class $\mathcal{C}$ and hence $\evc_{_{X}}(G)=\mvc_{_{X}}(G)=\evc(G)=\mvc(G)$.
\hfill\qed\end{proof}
By combining Observation~\ref{obs:evc-subset} and Lemma~\ref{lem:evc-cut-vertices}, we can derive the 
following stronger necessary condition for a graph to have its vertex cover number and eternal vertex cover number coincide.
\begin{theorem}[Necessary Condition]\label{thm:necessary-condn}
   Let $G(V, E)$ be any connected graph with at least two vertices. Let $X \subseteq V$ be the set of cut vertices of $G$. 
 If $\evc(G) = \mvc(G)$, then for every vertex $v \in V \setminus X$, there is a minimum vertex cover $S_v$ of $G$ such that $X \cup \{v\} \subseteq S_v$. 
\end{theorem}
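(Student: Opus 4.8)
The plan is to chain the two results that immediately precede the statement, namely Observation~\ref{obs:evc-subset} and Lemma~\ref{lem:evc-cut-vertices}. Together they already supply everything needed, so no fresh combinatorial argument about guard movements is required here; the theorem is essentially the bookkeeping consequence of combining the two.

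First I would invoke Lemma~\ref{lem:evc-cut-vertices}. Under the hypothesis $\evc(G) = \mvc(G)$, it yields $\evc_{_{X}}(G) = \mvc_{_{X}}(G) = \evc(G) = \mvc(G)$, where $X$ is the set of cut vertices of $G$. From this I extract two facts to be used separately: first, that $\evc_{_{X}}(G) = \mvc_{_{X}}(G)$; and second, that $\mvc_{_{X}}(G) = \mvc(G)$.

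Next I would apply Observation~\ref{obs:evc-subset} with the choice $U = X$. Its hypothesis $\evc_{_{U}}(G) = \mvc_{_{U}}(G)$ is precisely the first fact above, so its conclusion applies: for every vertex $v \in V \setminus X$, we have $\mvc_{_{X \cup \{v\}}}(G) = \mvc_{_{X}}(G)$. Feeding in the second fact gives $\mvc_{_{X \cup \{v\}}}(G) = \mvc(G)$ for every such $v$.

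Finally I would unwind the definition of $\mvc_{_{X \cup \{v\}}}(G)$: this quantity being equal to $\mvc(G)$ means that $G$ admits a vertex cover of cardinality $\mvc(G)$ that contains all of $X \cup \{v\}$; call it $S_v$. Since $|S_v| = \mvc(G)$, the set $S_v$ is a minimum vertex cover of $G$, and by construction $X \cup \{v\} \subseteq S_v$, which is exactly the asserted conclusion. I do not expect any genuine obstacle in this combination step: all the delicate reasoning lives inside Lemma~\ref{lem:evc-cut-vertices}, which forces every cut vertex to be permanently occupied by a guard under the equality $\evc(G)=\mvc(G)$, and inside Observation~\ref{obs:evc-subset}. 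Once those are in hand, the theorem follows directly.
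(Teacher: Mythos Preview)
Your proposal is correct and follows exactly the same route as the paper: invoke Lemma~\ref{lem:evc-cut-vertices} to obtain $\evc_{_{X}}(G)=\mvc_{_{X}}(G)=\mvc(G)$, then apply Observation~\ref{obs:evc-subset} with $U=X$ to conclude $\mvc_{_{X\cup\{v\}}}(G)=\mvc(G)$, which unpacks to the desired minimum vertex cover $S_v$. The only difference is cosmetic: you spell out the final unwinding of the definition of $\mvc_{_{X\cup\{v\}}}(G)$, which the paper leaves implicit.
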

\begin{proof}
Suppose $\evc(G) = \mvc(G)$. Then, by Lemma~\ref{lem:evc-cut-vertices}, we have $\evc_{_{X}}(G)=\mvc_{_{X}}(G)$. 
Hence, by Observation~\ref{obs:evc-subset}, for every vertex $v \in V \setminus X$, $\mvc_{_{X \cup \{v\}}}(G) = \mvc_{_{X}}(G)$. 
Since we also have $\mvc_{_{X}}(G)=\mvc(G)$ by Lemma~\ref{lem:evc-cut-vertices}, this implies that 
for every vertex $v \in V \setminus X$, $\mvc_{_{X \cup \{v\}}}(G) =\mvc(G)$.
 \hfill\qed
\end{proof}
\begin{remark}
 From Theorem~\ref{thm:necessary-condn}, it is evident that if $\evc(G) = \mvc(G)$, 
 then $G$ must have a minimum vertex cover containing all its cut vertices.
\end{remark}
The following is an interesting corollary of Lemma~\ref{lem:evc-cut-vertices} and Observation~\ref{obs:evc-subset}.
\begin{corollary}
 For any connected graph $G$ with at least three vertices and minimum degree one, $\evc(G) \ne \mvc(G)$.
\end{corollary}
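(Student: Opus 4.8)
The plan is to prove the contrapositive is unnecessary; instead I would directly use the necessary condition established in Theorem~\ref{thm:necessary-condn} together with Lemma~\ref{lem:evc-cut-vertices}. Suppose for contradiction that $G$ is a connected graph with at least three vertices and minimum degree one, and that $\evc(G) = \mvc(G)$. Let $u$ be a vertex of degree one and let $w$ be its unique neighbor. Let $X$ be the set of cut vertices of $G$.

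The first key step is to observe that $w$ must be a cut vertex, i.e.\ $w \in X$. Since $G$ has at least three vertices and is connected, there is some vertex other than $u$ and $w$; removing $w$ isolates $u$ from the rest of the graph, so $w$ separates $u$ from at least one other vertex and is therefore a cut vertex. Hence $w \in X \subseteq S_v$ for every minimum vertex cover $S_v$ guaranteed by Theorem~\ref{thm:necessary-condn}.

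The second key step applies Theorem~\ref{thm:necessary-condn} to the leaf $u$ itself. Note $u \notin X$ since a degree-one vertex cannot be a cut vertex. By the theorem, there is a minimum vertex cover $S_u$ of $G$ with $X \cup \{u\} \subseteq S_u$. But then $S_u$ contains both $u$ and $w$. I would then argue that such an $S_u$ cannot be minimum: since $u$ has degree one, its only incident edge $uw$ is already covered by $w \in S_u$, so $u$ is redundant and $S_u \setminus \{u\}$ is a strictly smaller vertex cover, contradicting the minimality of $S_u$. This contradiction establishes that $\evc(G) \ne \mvc(G)$.

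I expect the only subtle point to be justifying that $w$ is genuinely a cut vertex, which relies essentially on the hypothesis that $G$ has at least three vertices (for the path $P_2 = K_2$ the neighbor of the leaf is not a cut vertex, and indeed $\evc(K_2) = \mvc(K_2) = 1$, so the bound on the number of vertices is necessary). The rest is a short redundancy argument. A cleaner packaging would be to invoke the Remark following Theorem~\ref{thm:necessary-condn}: $G$ must have a minimum vertex cover $S$ containing all its cut vertices, in particular containing $w$; but since no minimum vertex cover contains a degree-one vertex whose neighbor is already present, one checks directly that forcing $w \in S$ while $u \notin X$ still leads, via the theorem applied to $u$, to the same contradiction.
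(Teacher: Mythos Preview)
Your proposal is correct and follows essentially the same approach as the paper's proof. The paper's one-sentence justification is that a degree-one vertex and its neighbor (which is a cut vertex, since $G$ has at least three vertices) cannot both lie in a minimum vertex cover of $G$; this is exactly the contradiction you derive via Theorem~\ref{thm:necessary-condn}.
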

The corollary holds because a degree one vertex and its neighbor (which is a cut vertex, if the graph itself is not just an edge) cannot be 
simultaneously present in a minimum vertex cover of $G$. 
\section{Sufficiency of the necessary condition for graph class $\mathcal{F}$}\label{sec:characterization}
In this section, we show that the necessary condition mentioned in Theorem~\ref{thm:necessary-condn} is also sufficient to have $\evc(G) = \mvc(G)$ for 
the graph class $\mathcal{F}$ defined here. We will also discuss some implications of the result when it is
applied to some subclasses of this graph class, like locally connected graphs, chordal graphs,
internally triangulated planar graphs and for graphs for which each minimum vertex cover induces a connected subgraph.
\begin{definition}[Graph class $\mathcal{F}$]
 The graph class $\mathcal{F}$ consists of all connected graphs $G$ for which each minimum vertex cover of $G$ that contains all cut vertices
of $G$ induces a connected subgraph in $G$. 
\end{definition}
For any subset $U \subseteq V$, let $G[U]$ denote the induced subgraph of $G$ on the vertex set $U$.
A vertex cover $S$ of a graph $G$ is called a connected vertex cover if $G[S]$ is connected. 
The \textit{connected vertex cover number} of $G$, $\cvc(G)$, is the size of a minimum cardinality connected vertex cover of $G$. 

The following lemma gives a sufficient condition under which the converse of Observation~\ref{obs:evc-subset} holds. 
The proof of this lemma involves repeated applications of Hall's matching theorem \cite{diestel}. 
\begin{lemma}\label{lem:evc-subset-converse}
 Let $G(V, E)$ be a connected graph with at least two vertices. Let $U \subseteq V$ and
 suppose that every vertex cover $S$ of $G$ of cardinality $\mvc_{_U}(G)$ that contains
 $U$ is connected. If for every vertex $v \in V \setminus U$, $\mvc_{_{U \cup \{v\}}}(G)=\mvc_{_U}(G)$, then $\evc_{_{U}}(G)=\mvc_{_{U}}(G)$. 
\end{lemma}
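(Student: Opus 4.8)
The plan is to exhibit an explicit eternal vertex cover class and to show it survives every attack in a single round. Set $k = \mvc_{_U}(G)$ and let $\mathcal{C}$ be the family of all vertex covers of $G$ of cardinality $k$ that contain $U$; by hypothesis every member of $\mathcal{C}$ induces a connected subgraph, and $\mathcal{C} \neq \emptyset$. Since the configurations in $\mathcal{C}$ already have the minimum possible cardinality among $U$-containing covers, it suffices to prove that $\mathcal{C}$ is an eternal vertex cover class: this gives $\evc_{_U}(G) \le k$, and the reverse inequality $\evc_{_U}(G) \ge \mvc_{_U}(G)$ is immediate since every configuration of such a class is a vertex cover containing $U$. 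Thus the whole content is a one-round reconfiguration statement: for each $S \in \mathcal{C}$ and each attacked edge $ab$, I must produce some $S' \in \mathcal{C}$ together with a legal simultaneous guard movement (each guard stays put or steps to a neighbour, no two guards sharing a vertex) that carries a guard across $ab$.

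The engine for this is a single matching fact, which I would isolate first. Viewing subsets as guard positions, a legal movement from $S$ to $S'$ is exactly a perfect matching in the bipartite graph $B(S,S')$ joining $s \in S$ to $s' \in S'$ whenever $s = s'$ or $ss' \in E(G)$. I claim that between any two minimum $U$-covers $S$ and $T$ (both of size $k$) such a matching exists. To check Hall's condition for saturating $S$, note that because $T$ is a vertex cover, $S \setminus T$ is independent with all neighbours inside $T$, and $S \setminus T \subseteq S \setminus U$ since $U \subseteq T$. The key minimality observation is that for any independent $I \subseteq S \setminus U$ one has $|N(I) \setminus S| \ge |I|$, since otherwise $(S \setminus I) \cup (N(I) \setminus S)$ would be a smaller vertex cover still containing $U$. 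Given $A \subseteq S$, write $P = A \setminus T$ and $Q = A \cap T$; then $T \cap N[A] \supseteq Q \cup N(P)$, so $|T \cap N[A]| \ge |Q| + |N(P)\setminus Q| \ge |Q| + |N(P)\setminus S| \ge |Q| + |P| = |A|$, where the last step uses $Q \subseteq S$ and the minimality bound on $P$. As $|S|=|T|$, the matching is perfect, so any two configurations of $\mathcal{C}$ are one legal move apart.

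With the engine in hand I would dispatch the two cases of a defence. If both endpoints of $ab$ lie in $S$, the guards on $a$ and $b$ exchange places: a legal move carrying guards across $ab$ that leaves the configuration equal to $S \in \mathcal{C}$. The substantive case is $a \in S$, $b \notin S$ (so $b \notin U$): here the only way to cross $ab$ is to send the guard at $a$ to $b$, so I must realise a legal move whose matching uses the specific edge $(a,b)$ and lands in $\mathcal{C}$. Using the second hypothesis I pick a minimum $U$-cover $T$ with $b \in T$ (connected by the first hypothesis), and I seek a perfect matching of $B(S,T)$ through $(a,b)$, equivalently a perfect matching between $S \setminus \{a\}$ and $T \setminus \{b\}$. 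Re-running the Hall estimate on the reduced instance shows the condition can fail only in a tight configuration, where $b$'s neighbours inside the tested set all lie outside $T$ and the minimality bound holds with equality.

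The main obstacle is precisely this tight case: forcing a guard across the attacked edge while keeping the destination a connected minimum $U$-cover. I expect to resolve it not by one Hall application but by a rerouting argument: start from a perfect matching $\psi\colon S \to T$ supplied by the engine, and when $\psi$ routes some other neighbour of $b$ onto $b$ rather than $a$, push this discrepancy along a $\psi$-alternating cycle through $(a,b)$ to swap $a$ onto $b$. This is where connectivity of the induced covers does the real work — it is what guarantees that the alternating structure exists and that each intermediate set remains a valid cover containing $U$, and hence why the first hypothesis cannot be dropped. Iterating the matching/rerouting step — the ``repeated applications of Hall's theorem'' — then yields the desired $S' \in \mathcal{C}$ and completes the verification that $\mathcal{C}$ is an eternal vertex cover class, so that $\evc_{_U}(G) = \mvc_{_U}(G)$.
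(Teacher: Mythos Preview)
Your engine is fine: the Hall argument that any two minimum $U$-covers $S,T$ admit a perfect matching in $B(S,T)$ is correct and is essentially the paper's Claim~1. The gap is the step you yourself flag as the ``main obstacle'': forcing that matching through the specific pair $(a,b)$ when $a\in S$, $b\notin S$. For an \emph{arbitrary} minimum $U$-cover $T\ni b$ there is no reason the edge $(a,b)$ lies in some perfect matching of $B(S,T)$; your proposed fix (``push along a $\psi$-alternating cycle through $(a,b)$'') is not justified, because adding the single non-matching edge $(a,b)$ to a perfect matching $\psi$ produces a path $\psi(a)\,a\,b\,\psi^{-1}(b)$, not a cycle, and nothing you have proved guarantees an alternating cycle through $(a,b)$ exists. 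You also never actually invoke the connectedness hypothesis in a concrete step --- you only assert that it ``does the real work'', which is exactly the part that needs to be written down.

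The paper closes this gap with two ideas you are missing. First, it does not take an arbitrary $T$: it takes $S_j$ containing $U\cup\{v\}$ that \emph{minimises the symmetric difference} with $S_i$. A second Hall argument then shows (Claim~2) that for every $x\in A:=S_i\setminus S_j$ the bipartite graph $G[A\cup B]\setminus\{x,v\}$ still has a perfect matching; otherwise one could shrink the symmetric difference. Second, connectedness is used concretely: if the attacked vertex $u$ lies in $S_i\cap S_j$, one takes a shortest path in $G[S_i]$ from $A$ to $u$, shifts the guards one step along this path (so $u$ is refilled from inside $S_i$ after it moves to $v$), and moves the remaining $A\setminus\{x\}$ to $B\setminus\{v\}$ via the matching of Claim~2. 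This path-shift is where the hypothesis that $S_i$ is connected is actually spent; an alternating-cycle argument in $B(S,T)$ does not substitute for it.
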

\begin{proof}
  Let $k=\mvc_{_U}(G)$. Suppose every vertex cover $S$ of $G$ with $U \subseteq S$ and $|S|=k$ is connected. 
  
  Assume that for every vertex $v \in V \setminus U$, $\mvc_{_{U \cup \{v\}}}(G)=k$.
  We will show the existence of an eternal vertex cover class $\mathcal{C}$ of $G$ with exactly $k$ guards such that 
  in every configuration of $\mathcal{C}$, all vertices in $U$ are occupied. 
   
  We may take any vertex cover $S$ of $G$ with $U \subseteq S$ and $|S|=k$ as the starting configuration. 
  It is enough to show that from any vertex cover $S_i$ of $G$ with $U \subseteq S_i$ and $|S_i|=k$, following an attack on an edge $uv$ 
  such that $v \notin S_i \ni u$, we can safely defend the attack by moving to a vertex cover $S_j$ 
  such that $(U \cup \{v\}) \subseteq S_j$ and $|S_j|=k$.
  
   Consider an attack on the edge $uv$ such that $u \in S_i$ and $v \notin S_i$.\\Let $\Gamma =\{S' : S'$ is a vertex cover of $G$ with $|S'|=k$ and $(U \cup \{v\}) \subseteq S'\}$.
     We will show that it is possible to safely defend the attack on $uv$ by moving from $S_i$ to $S_j$, 
     where $S_j \in \Gamma$ is an arbitrary minimum vertex cover such that the cardinality of its symmetric 
     difference with $S_i$ is minimized. 
     
     Let $T=S_i \cap S_j$, $S_i= T \uplus A$ and $S_j=T \uplus B$. 
     Since $S_i$ is a vertex cover of $G$ that is disjoint from $B$, we can see that $B$ is an independent set. Similarly, $A$ is also an independent set. 
     Hence, $H=G[A \uplus B]$ is a bipartite graph.  Further, since $|S_i|=|S_j|$ we also have $|A|=|B|$. 
     \begin{clm}\label{H_pm}
      $H$ has a perfect matching.
     \end{clm}
      \begin{proof}[of Claim~\ref{H_pm}]
      Note that $U \subseteq T$. Consider any  $B' \subseteq B$. Since $S_i= T \uplus A$ is a vertex cover of $G$, we have $N_G(B') \subseteq T \uplus A$.
If $|N_H(B')| < |B'|$, then $S'=T \uplus (B \setminus B') \uplus N_H(B')$
is a vertex cover of size smaller than $k$ with $U \subseteq S'$, violating the fact that $\mvc_{_{U}}(G)=k$. 
Therefore, $\forall B' \subseteq B$, $|N_H(B')| \ge |B'|$ and by Hall's theorem\cite{diestel}, $H$ has a perfect matching.
\qed
      \end{proof}
      Since $v \in S_j \setminus S_i$, we have $|A|=|B|\ge 1$.
       \begin{clm}\label{Hxv_pm}
   $\forall x \in A$, the bipartite graph $H \setminus \{x,v\}$ has a perfect matching.
     \end{clm}
      \begin{proof}[of Claim~\ref{Hxv_pm}]
      If $H \setminus \{x,v\}$ is empty, then the claim holds trivially. 
      Consider any non-empty subset $B' \subseteq (B \setminus \{v\})$. By Claim~\ref{H_pm}, $|N_H(B')| \ge |B'|$. 
      If $|N_H(B')| = |B'|$, then
$S'=T \uplus (B \setminus B') \uplus N_H(B')$ is a vertex cover of $G$ with $|S'|=k$ and $(U \cup \{v\}) \subseteq S'$. 
This contradicts the choice of $S_j$, since the symmetric difference of $S'$ and $S_i$ has lesser cardinality
than that of $S_i$ and $S_j$. Therefore, $|N_H(B')| \ge |{B'}|+1$ and $|N_H(B') \setminus \{x\}| \ge |{B'}|$.
Hence, for all subsets $B' \subseteq (B \setminus \{v\})$, $|N_H(B') \setminus \{x\}| \ge |{B'}|$ and by Hall's theorem, $H \setminus \{x,v\}$ has a perfect matching.
\qed
      \end{proof}
With the help of Claim~\ref{H_pm} and Claim~\ref{Hxv_pm}, we can now complete the proof of Lemma~\ref{lem:evc-subset-converse}.       
We will describe how the attack on the edge $uv$ can be defended by moving guards.  
      \begin{itemize}
       \item Case 1. $u \in A$:\\
       By Claim~\ref{Hxv_pm}, there exists a perfect matching $M$ in $H \setminus \{u,v\}$. In order to defend 
       the attack, move the guard on $u$ to $v$ and also all the guards on $A \setminus u$ to $B \setminus v$ 
       along the edges of the matching $M$. 
       \item Case 2. $u \in T$:\\
       Recall that $|A|=|B|\ge 1$. 
       By our assumption, the vertex cover $S_i=T \uplus A$ is connected. Let $P$ be a shortest path from $A$ to $u$ in $G[A \cup T]$.
       By the minimality of $P$, it has exactly one vertex $x$ from $A$ and $x$ will be an endpoint of $P$. Suppose 
       $P=(x, z_1, z_2, \cdots, z_t=u)$ where $z_i \in T$, for $1 \le i \le t$. By Claim~\ref{Hxv_pm}, 
       there exists a perfect matching $M$ in $H \setminus \{x,v\}$. In order to defend 
       the attack, move the guard on $u$ to $v$, $x$ to $z_1$ and $z_i$ to $z_{i+1}$, $\forall i\in [t-1]$.
       In addition, move all the guards on $A \setminus \{x\}$ to $B \setminus \{v\}$ 
       along the edges of the matching $M$. 
      \end{itemize}
In both cases, the attack can be defended by moving the guards as mentioned and the new configuration is $S_j$. 
\hfill\qed\end{proof}     
The following theorem, which follows from Theorem~\ref{thm:necessary-condn} and Lemma~\ref{lem:evc-subset-converse}, gives a necessary 
and sufficient condition for a graph $G$ to satisfy $\evc(G)=\mvc(G)$, if
every minimum vertex cover of $G$ that contains all cut vertices is connected.
\begin{theorem}[Characterization Theorem]\label{thm:evc-cut-vertices-char}
 Let $G(V, E)$ be a graph that belongs to $\mathcal{F}$, with at least two vertices, and $X \subseteq V$ be the set of cut vertices of $G$. 
 Then, $\evc(G)=\mvc(G)$ if and only if for every vertex $v \in V \setminus X$, there exists a minimum vertex cover $S_v$ of $G$ such 
 that $(X\cup \{v\}) \subseteq S_v$. 
\end{theorem}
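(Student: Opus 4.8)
The plan is to assemble the two directions from results already in hand, specializing the machinery to the set $X$ of cut vertices. The forward direction is immediate: it is precisely the content of Theorem~\ref{thm:necessary-condn}, which states that $\evc(G)=\mvc(G)$ forces, for every $v \in V \setminus X$, the existence of a minimum vertex cover $S_v$ with $X \cup \{v\} \subseteq S_v$. Nothing further is needed there.

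For the converse I would apply Lemma~\ref{lem:evc-subset-converse} with $U = X$, so the bulk of the work is verifying its two hypotheses and then chaining the resulting equality back to $\evc(G)$ and $\mvc(G)$. The first task is to reconcile the parameter $\mvc_{_X}(G)$ with $\mvc(G)$. Since $G$ is connected with at least two vertices, it has at least one non-cut vertex (a standard fact), so $V \setminus X \neq \emptyset$; picking any such $v$, the hypothesis supplies a minimum vertex cover containing $X \cup \{v\} \supseteq X$, whence $\mvc_{_X}(G) = \mvc(G)$. This identification is the small but essential bookkeeping step, because it lets me rewrite the defining property of $\mathcal{F}$ (``every minimum vertex cover containing all cut vertices is connected'') as exactly the connectedness hypothesis of Lemma~\ref{lem:evc-subset-converse} (``every vertex cover of cardinality $\mvc_{_X}(G)$ containing $X$ is connected''). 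The second hypothesis is equally direct: for each $v \in V \setminus X$ the given minimum vertex cover $S_v \supseteq X \cup \{v\}$ shows $\mvc_{_{X \cup \{v\}}}(G) \le \mvc(G) = \mvc_{_X}(G)$, and the reverse inequality is automatic, so $\mvc_{_{X \cup \{v\}}}(G) = \mvc_{_X}(G)$ as required.

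With both hypotheses in place, Lemma~\ref{lem:evc-subset-converse} yields $\evc_{_X}(G) = \mvc_{_X}(G)$. It then remains only to squeeze $\evc(G)$ between the known quantities. Dropping the constraint that every configuration contain $X$ can only enlarge the family of admissible eternal vertex cover classes, so $\evc(G) \le \evc_{_X}(G)$; combining this with the universal bound $\mvc(G) \le \evc(G)$ gives the chain $\mvc(G) \le \evc(G) \le \evc_{_X}(G) = \mvc_{_X}(G) = \mvc(G)$, forcing equality throughout and in particular $\evc(G) = \mvc(G)$.

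I expect no genuine obstacle in the theorem itself, since all the combinatorial difficulty (the repeated applications of Hall's theorem) is already absorbed into Lemma~\ref{lem:evc-subset-converse}. The one place that deserves care is ensuring $V \setminus X \neq \emptyset$ and the consequent equality $\mvc_{_X}(G) = \mvc(G)$: without it the two differently phrased ``connectedness'' conditions need not coincide, and the specialization of the lemma to $U = X$ would not be licensed.
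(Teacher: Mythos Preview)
Your proposal is correct and follows essentially the same route as the paper's own proof: both directions cite exactly Theorem~\ref{thm:necessary-condn} and Lemma~\ref{lem:evc-subset-converse} (with $U=X$), and the concluding chain $\mvc(G)\le\evc(G)\le\evc_{_X}(G)=\mvc_{_X}(G)=\mvc(G)$ is identical. You are in fact slightly more careful than the paper in justifying $V\setminus X\neq\emptyset$ and hence $\mvc_{_X}(G)=\mvc(G)$, which the paper dispatches with an ``it is easy to see''.
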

\begin{proof}
Let $k=\mvc(G)$ and suppose every minimum vertex cover $S$ of $G$ with $X \subseteq S$ is connected. 

If for every vertex $v \in V \setminus X$ there exists a minimum vertex cover $S_v$ of $G$ such that $(X\cup \{v\}) \subseteq S_v$, 
then it is easy to  see that $\mvc_{_{X}}(G)=\mvc(G)$. Hence, by our assumption that every minimum vertex cover $S$ of $G$ with $X \subseteq S$ is connected,
it follows that every vertex cover of $G$ of cardinality $\mvc_{_X}(G)$ that contains $X$ is connected.
Therefore, by Lemma~\ref{lem:evc-subset-converse}, we have $\evc_{_{X}}(G)=\mvc_{_{X}}(G)=\mvc(G)$.
Since $\mvc(G) \le \evc(G) \le \evc_{_{X}}(G)$, it follows that $\evc(G)=\mvc(G)$. 

Conversely, if $\evc(G)=\mvc(G)$, by Theorem~\ref{thm:necessary-condn}, 
for every vertex $v \in V \setminus X$, there exists a minimum vertex cover $S_v$ of $G$ such that $(X\cup \{v\}) \subseteq S_v$.
\hfill\qed\end{proof}
\begin{remark}
 By going through the proofs presented, it can be verified that Theorem~\ref{thm:necessary-condn} is valid also for the variant of the game where more than one guard is allowed
on a vertex simultaneously. 
\end{remark}
In the next section, we discuss some algorithmic implications of this theorem.
\section{Algorithmic consequences of the characterization theorem}
In this section, we derive some computational upper bounds that can be derived using the characterization theorem. 

The corollary below gives a method to determine $\evc(G)$ for a connected graph $G$ such that all its minimum vertex covers are connected.
\begin{corollary}\label{cor:evc-computation}
Let $G(V, E)$ be a connected graph for which every minimum vertex cover is connected. 
If for every vertex $v \in V$, there exists a minimum vertex cover $S_v$ of $G$ such that $v \in S_v$, then $\evc(G)=\mvc(G)$. Otherwise, $\evc(G)=\mvc(G)+1$.
\end{corollary}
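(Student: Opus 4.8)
The plan is to split on the dichotomy in the statement and, as a preliminary step, to pin down where the cut vertices sit. First I would record the following observation: in a connected graph all of whose minimum vertex covers are connected, every cut vertex belongs to \emph{every} minimum vertex cover. I would argue by contradiction---if a cut vertex $x$ were missing from a minimum vertex cover $S$, then $S$ must contain all neighbours of $x$ (to cover the edges at $x$), but two neighbours of $x$ lying in different components of $G-x$ can be joined in $G$ only through $x$; since $x \notin S$, the subgraph $G[S]$ would be disconnected, contradicting the hypothesis. In particular this yields $\mvc_{_{X}}(G)=\mvc(G)$ for the set $X$ of cut vertices, and it places $G$ in the class $\mathcal{F}$, so the characterization theorem is available.

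For the first case, where every vertex lies in some minimum vertex cover, I would simply feed this into Theorem~\ref{thm:evc-cut-vertices-char}. Given any $v \in V \setminus X$, the hypothesis provides a minimum vertex cover $S_v$ containing $v$; by the preliminary observation $S_v$ already contains $X$, so $X \cup \{v\} \subseteq S_v$. Since $G \in \mathcal{F}$, Theorem~\ref{thm:evc-cut-vertices-char} gives $\evc(G)=\mvc(G)$ directly.

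For the second case, where some vertex $v_0$ lies in no minimum vertex cover, the lower bound is immediate: by the contrapositive of Observation~\ref{obs:obvious}, $\evc(G)\neq\mvc(G)$, and since $\evc(G)\ge\mvc(G)$ always, we get $\evc(G)\ge\mvc(G)+1$. The substance is the matching upper bound $\evc(G)\le\mvc(G)+1$, which I would prove by exhibiting an explicit strategy with $\mvc(G)+1$ guards. Fix one connected minimum vertex cover $M$ (one exists by hypothesis), and maintain at all times a configuration of the form $M\cup\{w\}$ with $w\notin M$; note $G[M\cup\{w\}]$ is connected because $w$, not being in the vertex cover $M$, has all its neighbours in $M$. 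To defend an attack on an edge $uv$ with $v$ unoccupied, I would take a shortest path from $w$ to $u$ inside the connected subgraph $G[M\cup\{w\}]$, append the edge $uv$, and shift every guard one step along this path towards $v$; the endpoint guard crosses $uv$, the vertex $w$ is vacated, $v$ becomes occupied, and the occupied set returns to $M\cup\{v\}$, again of the required form. Attacks on edges with both endpoints occupied are handled by swapping the two guards across the edge, leaving the configuration unchanged. This shows $\evc(G)\le\mvc(G)+1$, and together with the lower bound gives $\evc(G)=\mvc(G)+1$.

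The main obstacle is the upper bound in the second case: the non-trivial content is the realization that a \emph{single} roaming extra guard suffices, so that the whole strategy reduces to a one-path shift rather than the full Hall's theorem rearrangement used in Lemma~\ref{lem:evc-subset-converse}. The care is needed in verifying that the fixed cover $M$ is re-occupied after each shift and that the path never meets the unoccupied target $v$ prematurely; the latter is automatic because the path stays within $G[M\cup\{w\}]$, which excludes $v$. It is worth noting that this argument uses only the existence of one connected minimum vertex cover, so the bound $\evc(G)\le\mvc(G)+1$ in fact holds throughout the hypothesis, with the characterization theorem sharpening it to equality precisely in the first case.
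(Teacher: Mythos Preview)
Your proof is correct and follows the same overall logic as the paper's: both combine Theorem~\ref{thm:evc-cut-vertices-char} (to decide when $\evc(G)=\mvc(G)$) with the upper bound $\evc(G)\le\mvc(G)+1$. The paper's argument is terser in two places. First, for the upper bound it simply cites the result of Klostermeyer and Mynhardt that $\evc(G)\le\cvc(G)+1$, whereas you reproduce that construction explicitly---your ``one roaming extra guard shifted along a path in $G[M\cup\{w\}]$'' is exactly their strategy. Second, the paper invokes Theorem~\ref{thm:evc-cut-vertices-char} without spelling out why the hypothesis ``$S_v\ni v$'' upgrades to ``$S_v\supseteq X\cup\{v\}$''; your preliminary observation that, under the connectedness assumption, every cut vertex lies in \emph{every} minimum vertex cover is precisely the missing link, and it also makes explicit why $G\in\mathcal{F}$. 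So your version is more self-contained, but not structurally different from the paper's.
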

\begin{proof}
Klostermeyer et~al.~\cite{Klostermeyer2009} showed that $\evc(G)$ is at most one more than the size of a connected vertex cover of $G$. 
Hence, from our assumption that all minimum vertex covers of $G$ are connected, we have $\evc(G) \le \mvc(G)+1$. 
Now, the result follows by Theorem~\ref{thm:evc-cut-vertices-char}.
\hfill\qed\end{proof}
\begin{remark}\label{rmk:evc-np}
If $G(V, E)$ is a connected graph for which every minimum vertex cover is connected, then it is easy to see that\\ 
$\evc(G) = \min \{k : \forall v \in V(G), G\text{ has a vertex cover of size $k$ containing $v$}\}$.\\
This is because for any vertex $v$, there is a vertex cover of $G$ of cardinality $\mvc(G)+1$ that contains $v$.
\end{remark}
\begin{corollary}\label{cor:evc-bic-np}
 Given a connected graph $G$ for which every minimum vertex cover is connected, deciding whether $\evc(G) \le k$ is in NP.
\end{corollary}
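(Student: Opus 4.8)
The plan is to reduce the decision problem to the existence of a polynomially bounded, polynomially checkable certificate, using the reformulation of $\evc(G)$ supplied by Remark~\ref{rmk:evc-np}. That remark tells us that, under the hypothesis that every minimum vertex cover of $G$ is connected, $\evc(G)=\min\{k' : \forall v\in V(G),\ G\text{ has a vertex cover of size }k'\text{ containing }v\}$. My first step is to convert this into a clean statement about the truth value of $\evc(G)\le k$: I would argue that $\evc(G)\le k$ holds if and only if for every vertex $v\in V(G)$ there is a vertex cover of $G$ of size at most $k$ that contains $v$.

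For the forward direction, if $\evc(G)\le k$ then by the remark each vertex $v$ already lies in a vertex cover of size exactly $\evc(G)$, which is at most $k$. For the converse, given for each $v$ a vertex cover $S_v$ of size at most $k$ with $v\in S_v$, I would pad each $S_v$ with arbitrary additional vertices up to size exactly $\min(k,n)$ to produce, for every $v$, a vertex cover of the size named in the set of the remark, witnessing $\evc(G)\le k$ (the case $k\ge n$ being trivial since $\evc(G)\le\mvc(G)+1\le n$, and the case $k<\mvc(G)$ making both sides false). This padding step, and the resulting interchangeability of ``size $k$'' and ``size at most $k$'', is the one place where a small verification is needed.

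With the reformulation in hand, membership in NP is immediate. For a yes-instance $(G,k)$ the certificate is the family $\{S_v : v\in V(G)\}$ of vertex covers, one per vertex, each containing its index vertex and of size at most $k$; this has total size $O(n^2)$. The verifier checks, for each $v$, that $v\in S_v$, that $|S_v|\le k$, and that $S_v$ covers every edge of $G$, all in polynomial time. Completeness follows from the forward direction of the equivalence and soundness from the converse.

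The main point here is conceptual rather than a technical obstacle: the connectivity hypothesis on minimum vertex covers is exactly what licenses the formula of Remark~\ref{rmk:evc-np}, and hence what makes a single family of per-vertex vertex covers an adequate witness. Without that hypothesis, $\evc(G)$ need not lie in $\{\mvc(G),\mvc(G)+1\}$ and such a family would no longer certify the eternal vertex cover number. I would therefore treat the class membership as part of the promise on the input, so that the verifier is responsible only for certifying $\evc(G)\le k$ and not for re-deriving the formula; in particular the certificate need not exhibit the (possibly hard to verify) connectivity of the minimum vertex covers.
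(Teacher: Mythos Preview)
Your proposal is correct and follows essentially the same route as the paper: both use the reformulation from Remark~\ref{rmk:evc-np} and take as certificate the family $\{S_v : v\in V(G)\}$ of vertex covers, one per vertex. You are more explicit than the paper in justifying the ``size $k$'' versus ``size at most $k$'' issue via padding and in treating the connectivity hypothesis as a promise on the input, but these are elaborations on the same argument rather than a different approach.
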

\begin{proof}
 By Remark~\ref{rmk:evc-np}, it is easy to get a polynomial time verifiable certificate to check if $\evc(G) \le k$. 
 The certificate can consist of $n$ vertex covers, in which for each vertex $v$ of $G$, there is a vertex cover $S_v$ of size $k$ that contains $v$.  
\hfill\qed
\end{proof}
The following is another immediate corollary of Theorem~\ref{thm:evc-cut-vertices-char}.
\begin{corollary}\label{cor:evc-sigma2}
Given a graph $G$ that belongs to $\mathcal{F}$, deciding whether $\evc(G)=\mvc(G)$ is in $P^{NP}$. 
\end{corollary}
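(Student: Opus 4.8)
The plan is to turn the characterization in Theorem~\ref{thm:evc-cut-vertices-char} directly into a procedure that runs in polynomial time while making a polynomial number of calls to an oracle for an NP problem. Since $G$ is promised to belong to $\mathcal{F}$, I do not need to test membership in $\mathcal{F}$; I only need to decide the condition appearing in the theorem, namely whether for every non-cut vertex $v$ there is a minimum vertex cover of $G$ containing $X \cup \{v\}$, where $X$ is the set of cut vertices of $G$.

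First I would compute the set $X$ of cut vertices of $G$, which is a standard polynomial-time computation. Next I would determine $\mvc(G)$. Although computing $\mvc(G)$ is NP-hard, it can be obtained with $O(\log n)$ queries to the NP oracle: the language $\{(G,k) : G \text{ has a vertex cover of size at most } k\}$ is in NP, with a vertex cover of size at most $k$ serving as a certificate, so a binary search on $k \in \{0,1,\dots,|V|\}$ pins down $\mvc(G)$ exactly. With the value $k^{\ast}=\mvc(G)$ in hand, I would then, for each vertex $v \in V \setminus X$, pose a single oracle query asking whether $G$ has a vertex cover of size at most $k^{\ast}$ that contains all of $X \cup \{v\}$. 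The language $\{(G,W,k) : G \text{ has a vertex cover of size at most } k \text{ containing } W\}$ is again in NP, with the constrained vertex cover itself as the certificate, so each such query is a legitimate call to the same NP oracle. A ``yes'' answer to the query for $v$ is exactly the statement $\mvc_{_{X \cup \{v\}}}(G) = \mvc(G)$, i.e., that $G$ has a minimum vertex cover containing $X \cup \{v\}$.

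Finally I would declare $\evc(G)=\mvc(G)$ precisely when all of the per-vertex queries return ``yes'', and $\evc(G)\ne\mvc(G)$ otherwise. Correctness is immediate from Theorem~\ref{thm:evc-cut-vertices-char}: since $G \in \mathcal{F}$, the two parameters coincide if and only if every $v \in V \setminus X$ lies in some minimum vertex cover together with all cut vertices, which is exactly what the queries test (when $V \setminus X = \emptyset$ the condition is vacuous and $\evc(G)=\mvc(G)$). The total number of oracle queries is $O(\log n)$ for the binary search plus $|V \setminus X| = O(n)$ for the constrained checks, hence polynomial, and all non-oracle work is polynomial; so the procedure is a $P^{NP}$ algorithm.

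The only points needing care, rather than a genuine obstacle, are (i) confirming that both the plain and the constrained vertex cover queries are in NP, so that a single oracle can answer all of them, and (ii) verifying that the exact value $\mvc(G)$ (not merely an approximation) can be extracted from the decision oracle by binary search. Both are routine; the substantive content has already been supplied by the characterization theorem, which reduces the global question about $\evc(G)$ to local, independently checkable conditions on whether a given vertex lies in some minimum vertex cover alongside all cut vertices.
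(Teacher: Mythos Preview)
Your proposal is correct and follows essentially the same approach as the paper: compute $X$ in polynomial time, obtain $\mvc(G)$ via polynomially many NP-oracle queries, and then invoke Theorem~\ref{thm:evc-cut-vertices-char} by querying, for each $v\in V\setminus X$, whether a minimum vertex cover containing $X\cup\{v\}$ exists. The only cosmetic difference is that the paper phrases the per-vertex query as a standard vertex-cover instance on $G\setminus(X\cup\{v\})$ with budget $\mvc(G)-|X|-1$, whereas you package it directly as the NP language of constrained vertex covers; the two formulations are clearly equivalent.
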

\begin{proof}
Using polynomially many queries to an NP oracle, we can compute $\mvc(G)$. 
 Let $t=\mvc(G)$ and $X$ be the set of cut vertices of $G$. Computing $X$ can be done in polynomial time. 
 By Theorem~\ref{thm:evc-cut-vertices-char}, it suffices to check whether for every vertex $v \in V \setminus X$, 
 there exists a vertex cover $S_v$ of $G$ of size $t$ such that $(X\cup \{v\}) \subseteq S_v$. 
 Checking whether there exists a vertex cover $S_v$ of $G$ of size $t$ such that $(X\cup \{v\}) \subseteq S_v$ is equivalent to checking 
 whether the graph $G\setminus (X \cup \{v\})$ has a vertex cover of size $t-|X|-1$. This decision problem is also in NP.  
 Thus, the entire procedure of deciding whether $\evc(G)=\mvc(G)$ requires only polynomially many queries to an NP oracle. 
 \qed
\end{proof}
Now, let us look at some graph classes for which the results stated above are applicable. 
\subsection{Locally connected graphs}
A graph $G$ is \textit{locally connected} if for every vertex $v$ of $G$, its open neighborhood $N_G(v)$ induces a connected subgraph in $G$.
Erd\"{o}s, Palmer and Robinson \cite{erdos1983} showed that local connectivity of random graphs exhibits a sharp threshold phenomenon. 
They proved that, when probability of adding an edge, $p(n)$, is $\sqrt{\frac{3 \log n}{2 n}}$ or higher, almost all graphs in $\mathcal{G}{(n, p)}$ 
are locally connected. 
Some other sufficient conditions for a graph to be locally connected were given by Chartrand and Pippert \cite{chartrand1974} and 
Vanderjagt \cite{vanderjagt1974}.\\\\
\noindent A \textit{block} in a connected graph $G$ is either a maximal biconnected component or a bridge of $G$. 
The following is a property of graphs for which each block is locally connected.
\begin{property}\label{prop:locally-connected}
 Let $G(V, E)$ be a connected graph. If every block of $G$ is locally connected, then every vertex cover of $G$ that contains all its cut vertices is 
 a connected vertex cover.
 \end{property}
\begin{proof}
 The restriction of a vertex cover $S$ of $G$ to a block will give a vertex cover of the block. 
 Hence, to prove the observation, it is enough to show that all vertex covers of a locally connected graph $G$ are connected.
 
 For contradiction, suppose $G$ is a locally connected graph and $S$ is a vertex cover of $G$ such that $G[S]$ is not connected.
 Then, there exists a vertex $v \in V\setminus S$ and two components $C_1$ and $C_2$ of $G[S]$ such that $v$ is adjacent to vertices $v_1 \in V(C_1)$ 
 and $v_2 \in V(C_2)$.
 Since $S$ is a vertex cover that does not contain $v$, we have $N_G(v) \subseteq S$. Since $G$ is locally connected, we know that $N_G(v)$ is connected and
 therefore, $v_1$ and $v_2$ must belong to the same component of $G[S]$, which is a contradiction.
 Hence, $G[S]$ is connected.
 \hfill\qed\end{proof}
 From Property~\ref{prop:locally-connected}, it follows that $\mathcal{F}$ includes all graphs for which every block is locally connected and 
 therefore, the conclusion in Theorem~\ref{thm:evc-cut-vertices-char} applies for them. 
 Combining this with Corollary~\ref{cor:evc-computation}, Corollary~\ref{cor:evc-bic-np} and Corollary~\ref{cor:evc-sigma2},
 we get:
\begin{corollary}\label{cor:local}
For a locally connected graph $G$, $\evc(G) \in \{\mvc(G), \mvc(G)+1\}$ and deciding whether $\evc(G) \le k$ is in NP.\\ 
If $G$ is a connected graph with at least two vertices in which every block is locally connected, then
  \begin{itemize}
   \item $\evc(G)=\mvc(G)$ if and only if for every vertex $v$ of $G$ that is not a cut-vertex, there is a minimum vertex cover of 
  $G$ that contains $v$ and all the cut-vertices
  \item deciding whether $\evc(G)=\mvc(G)$ is in $P^{NP}$.
  \end{itemize}
\end{corollary}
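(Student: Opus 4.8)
The plan is to treat the two regimes separately and feed each into the machinery already assembled, the crucial point being a connectivity fact that I would extract directly from the proof of Property~\ref{prop:locally-connected}. That proof first reduces, via restriction to blocks, to the single-graph claim that \emph{every} vertex cover of a locally connected graph induces a connected subgraph, and then establishes this claim by the argument that a vertex $v \notin S$ satisfies $N_G(v) \subseteq S$, so that connectivity of $N_G(v)$ forces all neighbours of $v$ into one component of $G[S]$. Hence for a locally connected $G$ every vertex cover, and in particular every minimum vertex cover, is connected.

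For the first assertion I would apply this observation together with the two general corollaries. Since every minimum vertex cover of a locally connected $G$ is connected, Corollary~\ref{cor:evc-computation} yields $\evc(G) = \mvc(G)$ when every vertex lies in some minimum vertex cover and $\evc(G) = \mvc(G)+1$ otherwise, so $\evc(G) \in \{\mvc(G), \mvc(G)+1\}$. The NP membership of deciding $\evc(G) \le k$ then follows verbatim from Corollary~\ref{cor:evc-bic-np}, whose sole hypothesis---that every minimum vertex cover be connected---we have just verified.

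For the second assertion, where only each \emph{block} is required to be locally connected, I would invoke Property~\ref{prop:locally-connected} in its stated block form: it certifies that every vertex cover of $G$ containing all cut vertices is connected, so in particular every minimum vertex cover containing the cut-vertex set $X$ is connected and therefore $G \in \mathcal{F}$. The equivalence in the first bullet is then exactly the Characterization Theorem (Theorem~\ref{thm:evc-cut-vertices-char}) specialized to this $G$, and the $P^{NP}$ bound in the second bullet is Corollary~\ref{cor:evc-sigma2} applied to a member of $\mathcal{F}$.

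I expect the only real subtlety to be keeping the two hypotheses straight. The stronger conclusion that \emph{all} vertex covers (not merely minimum ones containing the cut vertices) are connected is available only under genuine local connectivity, and it is this stronger fact that powers Corollary~\ref{cor:evc-computation} and Corollary~\ref{cor:evc-bic-np} in the first assertion; under the weaker block hypothesis one has access only to $\mathcal{F}$-membership, which is exactly why the second assertion is phrased through cut vertices and yields a $P^{NP}$ rather than an NP statement. The care required is therefore to cite the block-version of Property~\ref{prop:locally-connected} precisely where the block hypothesis holds, and to cite the internal vertex-cover-connectivity claim of that same proof where full local connectivity is assumed.
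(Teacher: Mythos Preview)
Your proposal is correct and mirrors the paper's own (terse) argument: the paper simply states that Property~\ref{prop:locally-connected} puts such graphs into $\mathcal{F}$ and then combines this with Corollary~\ref{cor:evc-computation}, Corollary~\ref{cor:evc-bic-np}, and Corollary~\ref{cor:evc-sigma2}. Your breakdown of which ingredient feeds which conclusion, and your care in distinguishing the full local-connectivity hypothesis from the block hypothesis, is exactly the intended reading.
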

The hardness of computing eternal vertex cover number of locally connected graphs is discussed in Section~\ref{sec:complexity}.
\subsubsection{Internally triangulated planar graphs}
A graph is an \textit{internally triangulated planar graph} if it has a planar embedding in which all internal faces are triangles. 
It can be easily seen that biconnected internally triangulated planar graphs are locally connected.  
Hence, both the conclusions of Corollary~\ref{cor:local} are applicable to internally triangulated planar graphs, as stated in Section~\ref{sec:intro}.  
The complexity of computing eternal vertex cover number of locally connected graphs is discussed in Section~\ref{sec:complexity}.

Since biconnected chordal graphs are locally connected, conclusions of Corollary~\ref{cor:local} hold for chordal graphs as well. However, for
chordal graphs, we can derive some stronger results, as explained below. 
\subsection{Polynomial time algorithms} 
A class of graphs $\mathcal{H}$ is called hereditary, if deletion of a subset of vertices from any graph $G$ in $\mathcal{H}$ would always yield
another graph in $\mathcal{H}$. 

We show that for any hereditary subclass of $\mathcal{F}$ for which minimum vertex cover computation is polynomial time, some stronger algorithmic
consequences follow. 
\begin{corollary}\label{cor:evc-hereditary}
Let $\mathcal{H}$ be a hereditary subclass of $\mathcal{F}$ such that, for all graphs $G$ in $\mathcal{H}$, $\mvc(G)$ can be computed in polynomial time.
Then, given a graph $G$ that belongs to $\mathcal{H}$,
\begin{enumerate}
 \item deciding whether $\evc(G)=\mvc(G)$ can be done in polynomial time,
 \item if $\evc(G)=\mvc(G)$, then there is a polynomial time (per-round) strategy for
 guard movements using $\evc(G)$ guards, and 
\item if $G$ is biconnected, then $\evc(G)$ can be computed in polynomial time and 
  there is a polynomial time (per-round) strategy for guard movements using $\evc(G)$ guards.
\end{enumerate}
\end{corollary}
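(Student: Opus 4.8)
The plan is to convert each of the three existence results established earlier into an explicit polynomial-time algorithm, the engine being the hypothesis that $\mvc$ is polynomial-time computable on the hereditary class $\mathcal{H}$. The key consequence I would use repeatedly is that for any $G\in\mathcal{H}$ and any $U\subseteq V$, both $\mvc_{_{U}}(G)$ and an actual witnessing minimum vertex cover are obtainable in polynomial time: deleting $U$ leaves the induced subgraph $G[V\setminus U]$, which again lies in $\mathcal{H}$ by heredity, so running the $\mvc$ routine on it and adjoining $U$ does the job.

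For part~1, I would first compute $k=\mvc(G)$ and the set $X$ of cut vertices (the latter by a standard block decomposition). By Theorem~\ref{thm:evc-cut-vertices-char} it then suffices to test, for each $v\in V\setminus X$, whether $G$ has a vertex cover of size $k$ containing $X\cup\{v\}$. As in the proof of Corollary~\ref{cor:evc-sigma2}, such a cover is $X\cup\{v\}$ together with a minimum vertex cover of $G[V\setminus(X\cup\{v\})]$, so this reduces to the single test $\mvc\bigl(G[V\setminus(X\cup\{v\})]\bigr)=k-|X|-1$, which is polynomial by heredity. Running this over all $v$ decides $\evc(G)=\mvc(G)$ in polynomial time.

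For part~2, I would make the constructive proof of Lemma~\ref{lem:evc-subset-converse} algorithmic with $U=X$, maintaining the invariant that the current configuration $S_i$ is a size-$k$ vertex cover containing $X$ (the initial one being $X$ plus a minimum vertex cover of $G[V\setminus X]$). A round attacking $uv$, with $u\in S_i$ and $v\notin S_i$, is handled by (i) producing a target $S_j\in\Gamma$, (ii) finding a perfect matching in $H\setminus\{x,v\}$ by bipartite matching, and (iii), when $u\in T$, a shortest $A$-to-$u$ path in $G[A\cup T]$ by breadth-first search (which exists since $G[S_i]$ is connected as $G\in\mathcal{F}$). Steps (ii) and (iii) are clearly polynomial, so the delicate point is (i): the proof selects $S_j$ of minimum symmetric difference with $S_i$, and I must obtain an $S_j$ with the matching guarantees in polynomial time. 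I would do this by local search. Start from any $S_j\in\Gamma$ given by the $\mvc$ routine on $G[V\setminus(X\cup\{v\})]$; whenever there is a nonempty $B'\subseteq B\setminus\{v\}$ with $|N_H(B')|=|B'|$, the swap $S_j\mapsto T\uplus(B\setminus B')\uplus N_H(B')$ from Claim~\ref{Hxv_pm} keeps $S_j\in\Gamma$ while strictly decreasing $|S_i\triangle S_j|$, and such a tight set (or a certificate of its absence) is extracted from a single maximum-matching computation via the K\"onig/Dulmage--Mendelsohn structure of $H$. Since each swap lowers $|A|=|B|$ by at least one, after at most $n$ iterations no tight set remains, and this is precisely the hypothesis that makes Claims~\ref{H_pm} and~\ref{Hxv_pm} hold for the final $S_j$. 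Hence each round costs polynomial time.

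For part~3, biconnectivity gives $X=\emptyset$, so every minimum vertex cover of $G\in\mathcal{F}$ is connected and Corollary~\ref{cor:evc-computation} applies: $\evc(G)=\mvc(G)$ if every vertex lies in some minimum vertex cover, and $\evc(G)=\mvc(G)+1$ otherwise. I would decide this by testing, for each $v$, whether $G[V\setminus\{v\}]$ has a vertex cover of size $k-1$ (polynomial by heredity), thereby computing $\evc(G)$. When $\evc(G)=\mvc(G)$ the per-round strategy is exactly part~2 with $X=\emptyset$. When $\evc(G)=\mvc(G)+1$, note $\cvc(G)=\mvc(G)$ here, and I would use the strategy underlying the bound $\evc(G)\le\cvc(G)+1$ of Klostermeyer and Mynhardt: a minimum (connected) vertex cover is computable in polynomial time, and its per-round movement, which only shifts guards along a path in a spanning tree of the connected cover together with the single extra guard, is realizable in polynomial time. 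The main obstacle throughout is step~(i) of part~2, and the local-search argument above is what keeps the per-round cost polynomial while preserving exactly the matching guarantees the defence relies on.
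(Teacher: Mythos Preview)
Your proposal is correct and follows essentially the same route as the paper: part~1 via Theorem~\ref{thm:evc-cut-vertices-char} and vertex-cover queries on induced subgraphs, part~3 via Corollary~\ref{cor:evc-computation} together with part~2 and the Klostermeyer--Mynhardt $\cvc+1$ strategy, and part~2 by making the proof of Lemma~\ref{lem:evc-subset-converse} algorithmic through an iterated local swap that strictly decreases $|S_i\triangle S_j|$. The only cosmetic difference is in how the swap is triggered in part~2: you search for a nonempty tight set $B'\subseteq B\setminus\{v\}$ via the Dulmage--Mendelsohn structure of $H$, whereas the paper tests, for each $x\in A$, whether $H\setminus\{x,v\}$ has a perfect matching and extracts a Hall violator via alternating paths when it does not---these two conditions are equivalent (a tight $B'$ exists iff some $H\setminus\{x,v\}$ lacks a perfect matching, the witnessing $x$ being any vertex of $N_H(B')$), so the resulting algorithms coincide.
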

\begin{proof}
Without loss of generality, we may assume that $G$ is connected and has at least two vertices. 
 \begin{enumerate}
  \item By our assumption, we can compute $\mvc(G)$ in polynomial time. 
  Identifying the set of cut vertices $X$ of $G$ can also be done in polynomial time. 
  By Theorem~\ref{thm:evc-cut-vertices-char}, to decide whether $\mvc(G)=\evc(G)$, it is enough to check for every vertex $v \in V \setminus X$ 
  whether $G$ has a minimum vertex cover $S_v \supseteq X \cup \{v\}$. 
  Note that checking whether $G$ has a minimum vertex cover containing $X \cup \{v\}$ 
  is equivalent to checking whether $\mvc(G)=\mvc(G')+|X|+1$, 
  where $G' = G \setminus (X \cup \{v\})$. Since $G' \in \mathcal{H}$, 
  we can compute $\mvc(G')$ and perform this checking in polynomial time.  
  \item Suppose $\evc(G)=\mvc(G)=k$ and $X$ be the set of cut vertices of $G$. 
  By Lemma~\ref{lem:evc-cut-vertices}, $\evc_{_{X}}(G)=\mvc_{_{X}}(G)=k$.  
  By our assumption, every vertex cover $S$ of $G$ with $X \subseteq S$ is a connected vertex cover. 
  Therefore, by Observation~\ref{obs:evc-subset}, for every vertex $v \in V \setminus X$, $\mvc_{_{X \cup \{v\}}}(G)=k$. 
  
  We complete the proof by extending the basic ideas used in the proof of Lemma~\ref{lem:evc-subset-converse}.
  Take any minimum vertex cover $S$ of $G$ with $X \subseteq S$ as the starting configuration. 
  It is enough to show that from any minimum vertex cover $S_i$ of $G$ with $X \subseteq S_i$, following an attack on an edge $uv$ 
  such that $u \in S_i$ and $v \notin S_i$, we can safely defend the attack by moving to a minimum vertex cover $S_j$ 
  such that $(X \cup \{v\}) \subseteq S_j$. Consider an attack on such an edge $uv$. 
  To start with, choose an arbitrary minimum vertex cover $S'$ of $G$ with $(X \cup \{v\}) \subseteq S'$ as a candidate for being the next configuration. 
  Suppose $Z=S_i \cap S'$, $S_i= Z \uplus A$ and $S'=Z \uplus B$. By similar arguments as in the proof of Lemma~\ref{lem:evc-subset-converse}, 
  $H=G[A \uplus B]$ is a non-empty bipartite graph with a perfect matching. 
  \begin{enumerate}
   \item[i.]
  If for each $x \in A$, the bipartite graph $H \setminus \{x,v\}$ has a perfect matching, then we can choose $S'$ to be the new configuration $S_j$ and move guards as explained in 
  the proof of Lemma~\ref{lem:evc-subset-converse}. 
  \item[ii.] If the bipartite graph $H \setminus \{x,v\}$ does not have a perfect matching for some $x \in A$, there is a method to redefine $S'$ to get 
  a new candidate configuration, as described below. 
  
  In polynomial time
  we can identify a subset $B' \subseteq (B \setminus \{v\})$ for which $|B'| > |N_{H\setminus \{x, v\} }(B')|$, using a standard procedure described below.
  First find a max-matching $M$ in $H \setminus \{x, v\}$ and identify an unmatched vertex $y \in B \setminus \{v\}$. Let $B'$ be the set of vertices in $B \setminus \{v\}$ 
  reachable via $M$-alternating paths from $y$ in $H \setminus \{x,v\}$, together with vertex $y$. If $|B'| \le |N_{H \setminus \{x,v\}}(B')|$, 
  it would result in an $M$-augmenting path
  from $y$, contradicting the maximality of $M$. Thus, $|B'| > |N_{H\setminus \{x, v\} }(B')|$.  
  (In fact, since $H$ has a perfect matching, $|B'| \le |N_{H}(B')|$ and this would mean $x \in N_{H}(B')$.)  
  Now, let $S'' = Z \cup \{x\} \cup (B \setminus B') \cup N_{H\setminus \{x, v\}}(B')$. 
  It is easy to see that $S''$ is a minimum vertex cover of $G$ with $(A \cup \{v\}) \subseteq S''$ and the symmetric difference of $S''$ and $S_i$ 
  is smaller than the symmetric difference of $S'$ and $S_i$. 
  Now we redefine $S'$ to be $S''$ and iterate the steps above after redefining the sets $Z$, $B$ and $A$ and the graph $H$ according to the new $S'$.
 \end{enumerate}
  We will repeat these steps until we reach a point when  
  the (re-defined) bipartite graph $H \setminus \{x,v\}$ has a perfect matching, for each $x \in A$. This process will terminate in less than $n$ iterations, because in each iteration, 
  the symmetric difference of the candidate configuration with $S_i$ is decreasing. 
  The basic computational steps involved in this process are computing minimum vertex covers containing $X \cup \{v\}$, finding maximum matching in
  some bipartite graphs and computing some alternating paths.
  All these computations can be performed in polynomial time \cite{hopcroft1973}. 
  \item Let $G$ be a biconnected graph in $\mathcal{H}$. By Corollary~\ref{cor:evc-computation}, $\evc(G) \in \{ \mvc(G)$, $\mvc(G)+1\}$.  Therefore, by using part 1 of this corollary,
  $\evc(G)$ can be decided exactly, in polynomial time. If  $\evc(G) = \mvc(G)$, using part 2 of this theorem, we can complete the proof. 
  If $\evc(G) = \mvc(G)+1$, we will make use of the fact that every minimum vertex cover of $G$ is connected. We will fix a minimum vertex cover $S$
  and initially place guards on all vertices of $S$ and also on one additional vertex. 
  Using the method given by Klostermeyer et al.~\cite{Klostermeyer2009} to show that $\evc(G) \le \cvc(G)+1$, we will be able 
  to keep defending attacks while maintaining guards 
  on all vertices of $S$ after end of each round of the game.
  \end{enumerate}
\hfill\qed\end{proof}
\subsubsection{Chordal graphs}
A graph is \textit{chordal} if it contains no induced cycle of length four or more. 
It is well-known that chordal graphs form a hereditary graph class and computation of a minimum vertex 
cover of a chordal graph can be done in polynomial time \cite{Gavril1972}. 
It can also be easily seen that biconnected chordal graphs are locally connected.  
Hence, the conclusions of Corollary~\ref{cor:evc-hereditary} hold for chordal graph.
\section{Complexity results}\label{sec:complexity}
In this section, we discuss some computational lower bounds of the eternal vertex cover problem. 
Fomin et al.~\cite{Fomin2010} showed that, given a graph $G$ and an integer $k$, deciding whether $\evc(G) \le k$ is NP-hard. 
However, the graph obtained by their reduction is not locally connected and it seems to be unknown whether the problem is NP-complete 
for any graph classes. In general, it is not known whether this problem is in NP or not. 
We show that this problem is NP-complete for locally connected graphs and 
biconnected internally triangulated planar graphs. Approximation hardness of the problem for locally connected graphs is also studied here. 
\subsection{Eternal vertex cover number of locally connected graphs}\label{sec:comp-local}
\begin{proposition}\label{prop:local_evc_NPC}
 Given a locally connected graph $G$ and an integer $k$, it is NP-complete to decide if $\evc(G) \le k$. Moreover, it is NP-hard to approximate
$\evc(G)$ of locally connected graphs within any factor smaller than $10 \sqrt{5}-21$ unless P=NP.
\end{proposition}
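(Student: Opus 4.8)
The plan is to dispose of the easy direction quickly and then obtain both the NP-hardness and the inapproximability from a single gap-preserving reduction from the minimum vertex cover problem. Membership in NP is immediate: since $G$ is locally connected, Corollary~\ref{cor:local} already asserts that deciding $\evc(G) \le k$ is in NP, the certificate being, as in Remark~\ref{rmk:evc-np}, a vertex cover of size $k$ through each vertex $v$. So all the real work lies in the hardness direction, and I would base it on the theorem of Dinur and Safra that it is NP-hard to approximate $\mvc$ of general graphs within any factor smaller than $10\sqrt 5 - 21$; concretely, there are linear thresholds $\alpha n < \beta n$ with $\beta/\alpha$ arbitrarily close to $10\sqrt5-21$ such that distinguishing $\mvc(G)\le \alpha n$ from $\mvc(G)\ge \beta n$ is NP-hard. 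The whole point of the reduction is to turn an arbitrary instance $G$ into a \emph{locally connected} graph $G'$ whose eternal vertex cover number differs from $\mvc(G)$ by only an additive constant, so that the multiplicative gap survives up to vanishing lower-order terms.

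The gadget I would use is a double apex: given $G(V,E)$, form $G'$ by adding two new vertices $a_1,a_2$, joining each $a_i$ to every vertex of $V$, and adding the edge $a_1a_2$. First I would verify that $G'$ is locally connected even when $G$ is disconnected. For an original vertex $v$, both apexes lie in $N_{G'}(v)$, are adjacent to each other and to every neighbour of $v$, so $G'[N_{G'}(v)]$ is connected through $a_1$; for $a_1$ we have $N_{G'}(a_1)=V\cup\{a_2\}$ and $a_2$ is adjacent to all of $V$, so this neighbourhood is connected, and symmetrically for $a_2$. (A single apex would fail to be locally connected at itself precisely when $G$ is disconnected, which is why two mutually adjacent apexes are needed.) Next I would compute $\mvc(G')$: a cover omitting $a_1$ must contain all of $N_{G'}(a_1)=V\cup\{a_2\}$, hence has size at least $n+1$, and likewise for $a_2$; so on the instances produced by the reduction, where $\mvc(G)<n-1$, every minimum cover contains both apexes, and then $S\cap V$ must cover $E$. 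This gives $\mvc(G')=\mvc(G)+2$. Combining this with the dichotomy $\evc(G')\in\{\mvc(G'),\mvc(G')+1\}$ of Corollary~\ref{cor:local} yields the crucial sandwich $\mvc(G)+2 \le \evc(G') \le \mvc(G)+3$.

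With the sandwich in hand the rest is routine gap chasing. A yes-instance gives $\evc(G')\le \alpha n+3$, while a no-instance gives $\evc(G')\ge \beta n+2$; since the thresholds are linear in $n$, the ratio of these quantities tends to $\beta/\alpha$ as $n\to\infty$, which can be pushed arbitrarily close to $10\sqrt5-21$. Hence, for any fixed $\rho<10\sqrt5-21$, a polynomial-time $\rho$-approximation of $\evc$ on locally connected graphs would solve the Dinur–Safra gap problem, establishing the claimed inapproximability. The same reduction delivers NP-completeness of the decision problem, since an algorithm for $\evc(G')\le k$ at the appropriate threshold $k$ would decide the gap problem; together with membership in NP this makes deciding $\evc(G)\le k$ NP-complete for locally connected graphs.

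The step I expect to demand the most care is the gadget design itself rather than any single computation: I must guarantee local connectivity \emph{without} perturbing the vertex cover number by more than a fixed additive amount, and handling possibly disconnected Dinur–Safra instances is exactly what forces the two-apex construction over the naive single apex. Once the sandwich $\mvc(G)+2\le \evc(G')\le \mvc(G)+3$ is secured through Corollary~\ref{cor:local}, the remaining ingredients — the elementary vertex-cover arithmetic for $G'$ and the fact that additive constants wash out of a multiplicative gap with linearly growing thresholds — are entirely straightforward.
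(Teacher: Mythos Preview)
Your argument is correct and lands on the same structure as the paper's: show membership in NP via Corollary~\ref{cor:local}, then transfer Dinur--Safra hardness through an apex construction that makes the graph locally connected while shifting $\mvc$ by only an additive constant, so that the sandwich $\evc(G')\in\{\mvc(G'),\mvc(G')+1\}$ preserves the multiplicative gap.

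The one genuine difference is the gadget. The paper adds a \emph{single} universal vertex, which suffices because it invokes the Dinur--Safra inapproximability for \emph{connected} graphs $G$; then $N_{G'}(\text{apex})=V(G)$ induces $G$, which is connected, and every original vertex sees the apex, so $G'$ is locally connected with $\mvc(G')=\mvc(G)+1$. You instead add two mutually adjacent apexes, explicitly to survive disconnected source instances, and pay $\mvc(G')=\mvc(G)+2$ together with the side condition $\mvc(G)<n-1$ (harmless on gap instances). Your version is more robust---it does not depend on the Dinur--Safra hard instances being connected---while the paper's is slightly cleaner once that connectivity is granted. Either way the additive shift washes out of the linear gap, and the NP-hardness of the decision problem follows from the same reduction composed with an appropriate threshold $k$.
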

\begin{proof}
By Corollary \ref{cor:local}, given a locally connected graph $G$, and an integer $k$, deciding whether $\evc(G) \le k$ is in NP.

 A famous result by Dinur et al.~\cite{dinur2005} states that given a connected graph $G$, it is NP-hard to approximate
the minimum vertex cover number of connected graphs within any factor smaller than $10 \sqrt{5}-21$. 
For a given connected graph $G$ and integer $k$, 
we can construct a locally connected graph $G'$ by adding a new vertex to $G$ and connecting it to all the existing vertices of $G$. 
 It can be seen easily that $\mvc(G')=\mvc(G)+1$. 
Therefore, even for locally connected graphs, the minimum vertex cover number is NP-hard to approximate within any factor smaller than $10 \sqrt{5}-21$. 
By Corollary \ref{cor:local}, $\mvc(G') \le \evc(G') \le \mvc(G')+1$. Hence, the result follows.
\hfill\qed\end{proof}
\subsection{Eternal vertex cover number of biconnected internally triangulated planar graphs}\label{sec:comp-int-triang}
Since biconnected internally triangulated graphs are locally connected, as explained in the previous section, 
given a biconnected internally triangulated planar graph $G$ and an integer $k$, deciding whether $\evc(G) \le k$ is in NP. 
We will show that this decision problem is NP-hard using a sequence of simple reductions. 
First we show that the classical vertex cover 
problem is NP-hard for biconnected internally triangulated planar graphs. Then we will show that an additive one approximation to vertex cover is also NP-hard 
for the same class and use it to derive the required conclusion. 
\begin{figure}[h]
\centering 
\includegraphics[scale=.6]{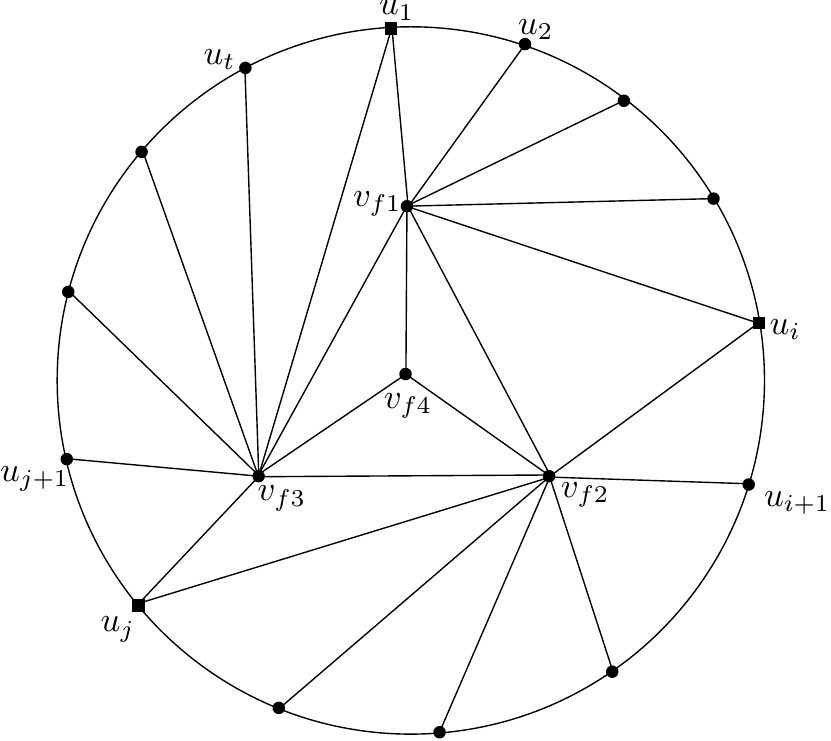}
\caption{Triangulating an internal face $f=u_1,u_2,\cdots u_t,u_1$ of $G$ with $t>3$, by adding new vertices and edges.}
\label{fig:fig_hardness}
 \end{figure}
\begin{proposition}\label{prop:triang_vc_NPC}
 Given a biconnected internally triangulated planar graph $G$ and an integer $k$, it is NP-complete to decide if $\mvc(G) \le k$.
\end{proposition}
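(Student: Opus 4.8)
The plan is to establish membership in NP first and then prove NP-hardness by a reduction from the vertex cover problem on planar graphs. Membership in NP is immediate: a subset of vertices of size at most $k$ can be checked to be a vertex cover in polynomial time, so it serves as a certificate. For hardness, I would start from the well-known NP-hardness of deciding $\mvc(H)\le k$ for (2-connected) planar graphs $H$, and transform $H$ into a biconnected internally triangulated planar graph $G$ whose minimum vertex cover number is a fixed, easily computable function of $\mvc(H)$.

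The construction follows Figure~\ref{fig:fig_hardness}. I would fix a planar embedding of $H$ (first ensuring $H$ is biconnected, since triangulating the internal faces of a $2$-connected plane graph keeps it $2$-connected) and then triangulate every internal face $f=u_1u_2\cdots u_t u_1$ of length $t>3$ by inserting new vertices and edges entirely inside $f$. The crucial design requirement is that the gadget $T_f$ placed inside $f$ (i) triangulates $f$ without adding any chord between two boundary vertices, so that it does not alter vertex-cover constraints among the original vertices, and (ii) contributes a \emph{fixed} amount $c_f$, depending only on $t$, to the minimum vertex cover, \emph{independently} of which boundary vertices $u_i$ happen to lie in the cover. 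Property (ii) is engineered by forcing every newly added vertex into every minimum vertex cover through local triangulated forcing structures (for instance $K_4$-type blocks, each of which requires three of its four vertices in any cover while remaining internally triangulated and planar), so that all new edges are always covered by new vertices alone and no boundary vertex can be traded in to save a new one.

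With such a gadget, the heart of the proof is the counting identity $\mvc(G)=\mvc(H)+\sum_f c_f$. The upper-bound direction is routine: any vertex cover of $H$ extends to a vertex cover of $G$ by adding, in each face, the $c_f$ forced gadget vertices, which cover all new edges. The lower-bound direction is where I expect the main obstacle. I would prove it by a local exchange argument: starting from an arbitrary minimum vertex cover $S$ of $G$, I show that within each gadget $T_f$ one can replace $S$'s choice by the canonical set of $c_f$ forced new vertices without increasing $|S|$ and without uncovering any edge of $H$; restricting the resulting cover to $V(H)$ then yields a vertex cover of $H$ of size $|S|-\sum_f c_f$. The delicate point is ruling out that a minimum cover of $G$ exploits boundary vertices to cover gadget edges more cheaply than the forced vertices do---exactly what the forcing structures in (ii) are chosen to prevent---and confirming that these replacements in different faces do not interfere, which holds because the gadgets are vertex-disjoint except on the shared face boundaries.

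Finally, I would verify that $G$ is biconnected and internally triangulated by construction, so that the reduction lands in the required graph class, and that both $G$ and the target value $k+\sum_f c_f$ are computable in polynomial time from $(H,k)$. Combined with membership in NP, this yields NP-completeness of deciding $\mvc(G)\le k$ for biconnected internally triangulated planar graphs.
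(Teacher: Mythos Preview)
Your proposal is correct and follows essentially the same route as the paper: the paper also reduces from vertex cover on biconnected planar graphs and triangulates each non-triangular internal face with a $K_4$-type gadget on four new vertices (Figure~\ref{fig:fig_hardness}), obtaining the fixed additive contribution $c_f=3$ per face and the identity $\mvc(G')=\mvc(G)+3f'$. The paper asserts this identity with an ``easily seen'' while you spell out the exchange argument for the lower bound, but the construction and logic are the same.
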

\begin{proof}
  The vertex cover problem on biconnected planar graph is known to be NP-hard \cite{MOHAR2001102}. 
 We show a reduction from the 
 vertex cover problem on biconnected planar graph to the 
 vertex cover problem on biconnected internally triangulated planar graph. 
 Suppose we are given a biconnected planar graph $G$ and an integer $k$. We construct $G'$ such that $G$ is an induced subgraph in $G'$. 
 First, compute a planar 
 embedding of $G$ in polynomial time \cite{schnyder1990}. We know that, in any planar embedding, each face of a biconnected planar graph is bounded by a cycle \cite{diestel2018}. 
 To construct $G'$, each internal face of $G$ with more than three vertices on its boundary is triangulated by adding four new vertices and some edges (see Fig.~\ref{fig:fig_hardness}).
 Let $f=u_1,u_2,\cdots u_t,u_1$ be a cycle bounding an internal face of $G$, with $t>3$. Let
 $i$ and $j$ be two distinct indices from $[2,t]$. Add three vertices $v_{f1}$, $v_{f2}$ and $v_{f3}$ inside $f$. Now, add edges ($v_{f1}$, $u_1$), ($v_{f1}$, $u_2$) $\cdots$
 ($v_{f1}$, $u_i$), ($v_{f2}$, $u_i$), ($v_{f2}$, $u_{i+1}$) $\cdots$ ($v_{f2}$, $u_j$), ($v_{f3}$, $u_j$), ($v_{f3}$, $u_{j+1}$), $\cdots$ ($v_{f3}$, $u_t$) 
 and ($v_{f3}$, $u_1$) 
 in such a way that the graph being constructed does not loses its planarity. Add a new vertex $v_{f4}$ inside the triangle formed by $v_{f1}$, $v_{f2}$ 
 and $v_{f3}$. Now, make $v_{f4}$ adjacent to $v_{f1}$, $v_{f2}$ and $v_{f3}$ by adding edges ($v_{f4},v_{f1}$), ($v_{f4},v_{f2}$) and ($v_{f4},v_{f3}$). 
 Repeat this construction procedure for all faces of $G$ bounded by more than $3$ vertices. 
 As per the construction, it is clear that the resultant graph $G'$ is biconnected, internally triangulated and planar.
 It can be seen easily that the biconnected triangulated planar graph $G$ has a vertex cover of size at most $k$ if and only if the biconnected 
 internally triangulated planar graph $G'$ has a vertex 
 cover of size at most $k'=k+3f'$ where $f'$ is the number of internal faces of $G$ bounded by more than $3$ vertices. 
\hfill\qed\end{proof}

 In the proof of Proposition~\ref{prop:local_evc_NPC}, we used the  APX-hardness of vertex cover problem of locally connected graphs to 
 derive the APX-hardness of eternal vertex cover problem of locally connected graphs.
 However, a polynomial time approximation scheme is known for computing the minimum vertex cover number of planar graphs \cite{Baker1994}. 
 Hence, we need a different approach to show the NP-hardness of eternal vertex cover problem of planar graphs.
  We will show that if minimum vertex cover number of biconnected internally triangulated planar graphs can be approximated within an additive one error, 
  then it can be used to precisely compute the minimum vertex cover number of graphs of the same class. 
\begin{figure}[h]
\centering 
\includegraphics[scale=.7]{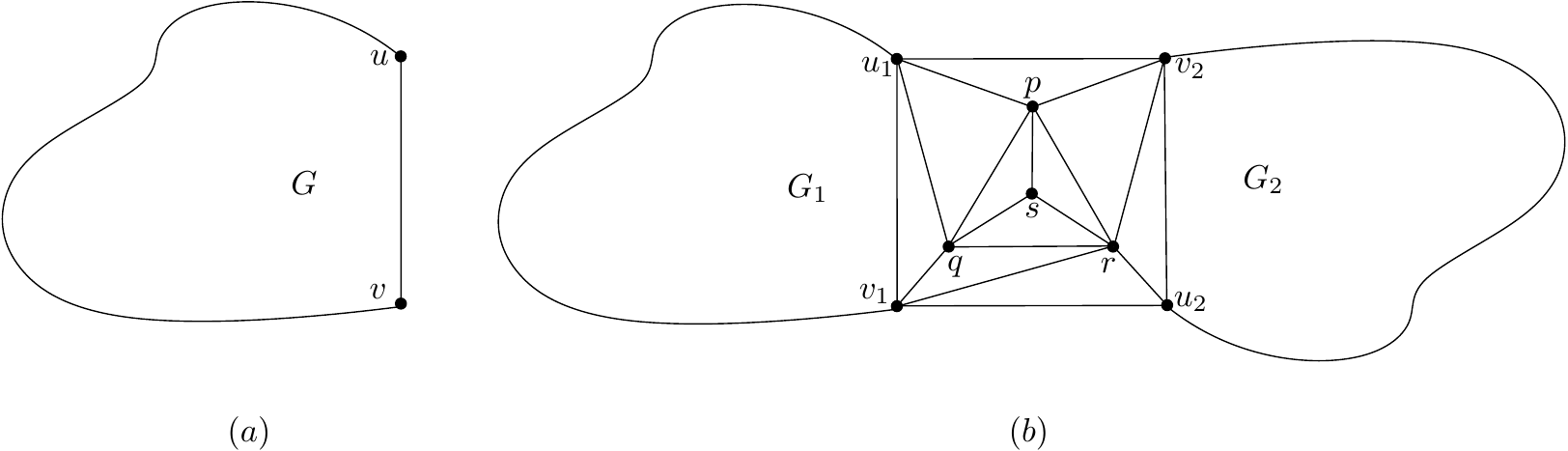}
\caption{NP-hardness reduction for additive one approximation of vertex cover number of biconnected internally triangulated planar graphs.}
\label{fig:fig_approx_hardness}
 \end{figure}
  \begin{proposition}\label{prop:additive_vc_NPC}
 Getting an additive $1$-approximation for computing the minimum vertex cover number of biconnected internally triangulated planar graphs is NP-hard. 
\end{proposition}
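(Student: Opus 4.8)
The plan is to give a Turing (self\nobreakdash-)reduction from the problem of computing $\mvc(G)$ \emph{exactly} on biconnected internally triangulated planar graphs, which is NP\nobreakdash-hard by Proposition~\ref{prop:triang_vc_NPC}, to the problem of computing an additive $1$\nobreakdash-approximation of $\mvc$ on the same class. Suppose, towards a contradiction, that there is a polynomial time algorithm $A$ which on every biconnected internally triangulated planar graph $H$ returns a value $A(H)$ with $|A(H)-\mvc(H)|\le 1$. The idea is to take several disjoint copies of the input graph and stitch them into a single graph of the same class whose vertex cover number is an exactly known affine function of $\mvc(G)$; an additive error of $1$ on this larger graph then translates, after dividing out the number of copies, into an error strictly smaller than $1/2$, which is enough to read off $\mvc(G)$ exactly.

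Concretely, given a biconnected internally triangulated planar graph $G$, I would build a graph $G^{\ast}$ from three disjoint copies $G_1,G_2,G_3$ of $G$. Fix a planar embedding of each copy whose outer face is bounded by a cycle (such an embedding exists since each copy is biconnected, as used in the proof of Proposition~\ref{prop:triang_vc_NPC}). Place the three copies in three disjoint disks of the plane and connect them through a connector drawn in the central region between them, together with additional triangulating vertices, mimicking the face\nobreakdash-triangulation gadget from the proof of Proposition~\ref{prop:triang_vc_NPC}: new vertices are added and joined to boundary vertices of the copies and to one another so that every internal face of the resulting drawing becomes a triangle. The outer face (the cycle running around all three copies) is left untriangulated, which is permitted for an internally triangulated graph. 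By routing the connector so that each copy is attached to the rest of the graph through at least two internally disjoint connections, $G^{\ast}$ is biconnected; it is planar and internally triangulated by construction.

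The heart of the argument is the exact identity $\mvc(G^{\ast}) = 3\,\mvc(G) + c$, where $c$ depends only on the connector (the number and type of triangulating vertices added), and hence is computable in polynomial time and independent of $\mvc(G)$. For the upper bound, take an optimum cover in each copy and add those connector vertices playing the role of the $v_{f1},v_{f2},v_{f3}$ vertices of the gadget; as in Proposition~\ref{prop:triang_vc_NPC} these cover every new edge. For the lower bound, I would reuse the clique\nobreakdash-based argument: the triangulating gadget vertices form $K_4$'s (or, more generally, cliques) forcing a fixed number of them into any cover, while the original boundary\nobreakdash-cycle edges of each copy $G_i$ can only be covered by vertices of $G_i$ itself, so an optimum cover of $G^{\ast}$ restricts to a cover of each $G_i$ of size at least $\mvc(G)$, on pairwise disjoint vertex sets. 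Given this identity, running $A$ on $G^{\ast}$ yields $A(G^{\ast})$ with $|A(G^{\ast})-(3\,\mvc(G)+c)|\le 1$, so $(A(G^{\ast})-c)/3$ differs from $\mvc(G)$ by at most $1/3<1/2$; the nearest integer to this quantity is therefore exactly $\mvc(G)$. This computes $\mvc(G)$ in polynomial time, contradicting Proposition~\ref{prop:triang_vc_NPC} unless P $=$ NP.

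The main obstacle I anticipate is making the two requirements compatible: the connector must simultaneously render $G^{\ast}$ biconnected, planar, and internally triangulated, and keep the additivity of $\mvc$ \emph{exact}. The delicate point is ensuring that the connector cannot be exploited by a clever vertex cover, that is, that no gadget vertex can help cover an original edge of a copy and no copy vertex can absorb part of the connector's cost, so that the per\nobreakdash-copy contribution is pinned to $\mvc(G)$ and the connector contribution is pinned to $c$, with no savings from interaction. This is precisely what the clique structure of the Proposition~\ref{prop:triang_vc_NPC} gadget buys, so the verification reduces to checking that each connecting region, once triangulated, contributes only clique gadgets of fixed covering cost regardless of how the copies are covered, and to confirming that no cut vertex is created where the copies meet the connector.
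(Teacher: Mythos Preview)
Your approach is essentially the paper's: make multiple copies of $G$, stitch them into a single biconnected internally triangulated planar graph via the face\nobreakdash-triangulation gadget of Proposition~\ref{prop:triang_vc_NPC} so that $\mvc$ of the result is an exact affine function of $\mvc(G)$, then divide out the multiplier to kill the additive error. The paper carries this out with two copies rather than three: it joins $G_1$ and $G_2$ by adding the two edges $u_1v_2$ and $v_1u_2$ across a chosen outer\nobreakdash-face edge $uv$, triangulates the resulting $4$\nobreakdash-cycle face with the four\nobreakdash-vertex gadget, and obtains $\mvc(G')=2\,\mvc(G)+3$; under the standard one\nobreakdash-sided reading of ``additive $1$\nobreakdash-approximation'' (the returned value $k'$ lies in $\{\mvc(G'),\mvc(G')+1\}$), one recovers $\mvc(G)=\lfloor(k'-3)/2\rfloor$. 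Your three\nobreakdash-copy variant buys robustness against two\nobreakdash-sided error at the price of a more elaborate connector that you leave as a sketch; if you want to make it fully concrete with no extra thought, simply chain two instances of the paper's two\nobreakdash-copy join ($G_1$ to $G_2$, then $G_2$ to $G_3$) to get $\mvc=3\,\mvc(G)+6$.
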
 
\begin{proof}
Let $G(V, E)$ be the given biconnected internally triangulated planar graph. Consider a fixed planar internally triangulated embedding of $G$. 
The reduction algorithm constructs a new graph $G'$ as follows. Make two copies of $G$ 
 namely, $G_1$ and $G_2$. For each vertex $v \in V$, let $v_1$ and $v_2$ denote its corresponding vertices in $G_1$ and $G_2$ respectively. 
 Choose any arbitrary edge $e=uv$ on the outer face of $G$. 
 Add new edges $u_1v_2$ and $v_1u_2$ maintaining the planarity. Now, the new graph is biconnected and planar; but the face with boundary $u_1 v_1 u_2 v_2$
 needs to be triangulated. For this, we follow the same procedure we used in the proof of Proposition~\ref{prop:triang_vc_NPC} which adds four new 
 vertices $p, q, r, s$ and some new edges inside this face (see Fig.~\ref{fig:fig_approx_hardness}). The resultant graph $G'$ is biconnected, internally triangulated and planar. 
 
Consider a minimum vertex cover $S$ of $G$ such that $\mvc(G)=|S|=k$. It is clear that either $u$ or $v$ is in $S$. It is easy to see that
$S'=\{v_1 : v\in S\} \cup \{v_2 : v \in S\} \cup \{ p, q, r\}$ is a vertex cover of $G'$ with size $2k+3$. Similarly, at least $k$ vertices from 
$G_1$ and $G_2$ and at least $3$ vertices among \{$p$, $q$, $r$, $s$\} has to be chosen for a minimum vertex cover of $G'$. This shows that
$\mvc(G')=2k+3$.
 
  Suppose there exist a polynomial time additive $1$-approximation algorithm for computing the minimum vertex cover number of biconnected internally triangulated graphs. 
  Let $k'$ be the approximate value of minimum vertex cover of $G'$, computed by this algorithm. 
  Then, $k' \in \{\mvc(G'),\mvc(G')+1\}$. This implies that $\mvc(G)=\left\lfloor{\frac{(k'-3)}{2}}\right\rfloor$, giving a polynomial time algorithm to compute $\mvc(G)$. 
  Hence, by Proposition \ref{prop:triang_vc_NPC}, getting an additive $1$-approximation for computing the minimum vertex cover for biconnected
 internally triangulated planar graphs is NP-hard.
\hfill\qed\end{proof}
By Corollary~\ref{cor:local}, for a biconnected internally triangulated graph, $\mvc(G) \le \evc(G) \le \mvc(G)+1$. 
Therefore, a polynomial time algorithm to compute $\evc(G)$ would give a polynomial time additive $1$-approximation for $\mvc(G)$. 
Hence, by Proposition~\ref{prop:additive_vc_NPC}, we have the following result. 
\begin{proposition}
 Given a biconnected internally triangulated planar graph $G$ and an integer $k$, it is NP-complete to decide if $\evc(G)\le k$.
\end{proposition}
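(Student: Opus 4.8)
The plan is to prove both membership in NP and NP-hardness, leaning entirely on results already established above. Membership is immediate: since biconnected internally triangulated planar graphs are locally connected (as noted at the start of Section~\ref{sec:comp-int-triang}), Corollary~\ref{cor:local} guarantees that deciding whether $\evc(G) \le k$ is in NP for such graphs. A polynomially verifiable certificate, consisting of one vertex cover of size $k$ per vertex as described in Remark~\ref{rmk:evc-np}, witnesses this directly.

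For hardness I would argue by contradiction, reducing from the additive-$1$ approximation problem shown NP-hard in Proposition~\ref{prop:additive_vc_NPC}. The crucial structural input is Corollary~\ref{cor:local}, which pins $\evc(G)$ into the two-element window $\{\mvc(G),\,\mvc(G)+1\}$ for every biconnected internally triangulated planar graph $G$. Suppose, for contradiction, that there is a polynomial-time algorithm deciding $\evc(G) \le k$ for this class. Then I would invoke this decision procedure across the candidate values $k = 1, 2, \dots, |V|$ (or perform a binary search over this range) to extract the exact value of $\evc(G)$ in polynomial time.

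Having computed $\evc(G)$ exactly, I would observe that it is itself an additive-$1$ approximation of $\mvc(G)$: by the sandwich above, $\mvc(G) \le \evc(G) \le \mvc(G)+1$, so $|\evc(G) - \mvc(G)| \le 1$. This yields a polynomial-time additive-$1$ approximation of the minimum vertex cover number for biconnected internally triangulated planar graphs, contradicting Proposition~\ref{prop:additive_vc_NPC} unless $\mathrm{P}=\mathrm{NP}$. Combining NP-membership with NP-hardness gives NP-completeness, as required.

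I expect no genuine obstacle in this final step; the proof is essentially bookkeeping once the two substantive ingredients are in hand, namely the tight sandwich of Corollary~\ref{cor:local} and the additive-$1$ inapproximability of Proposition~\ref{prop:additive_vc_NPC} (which in turn rested on the triangulation reduction of Proposition~\ref{prop:triang_vc_NPC} and the doubling construction). The only point deserving a line of care is that a mere yes/no decision oracle for $\evc(G)\le k$ must be upgraded to the exact value of $\evc(G)$; this is routine precisely because that value lies in a window of width one, so scanning the thresholds (or a single binary search) recovers it in polynomial time.
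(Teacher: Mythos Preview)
Your proposal is correct and follows essentially the same approach as the paper: NP-membership via Corollary~\ref{cor:local} (biconnected internally triangulated planar graphs are locally connected), and NP-hardness by observing that the sandwich $\mvc(G)\le\evc(G)\le\mvc(G)+1$ makes any exact computation of $\evc(G)$ an additive-$1$ approximation of $\mvc(G)$, contradicting Proposition~\ref{prop:additive_vc_NPC}. The paper's argument is slightly terser in that it silently identifies the decision problem with computing $\evc(G)$, whereas you spell out the routine threshold scan; both are fine.
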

Note that, using the PTAS designed by Baker et al.~\cite{Baker1994} for computing the minimum vertex cover number of planar graphs, 
it is possible to derive a polynomial time approximation scheme for computing the eternal vertex cover number of biconnected internally triangulated planar graphs.
A summary of the complexity results presented here are given in Fig.~\ref{fig:complexity}. 

\begin{figure}[h]
\centering
\includegraphics[scale=0.7]{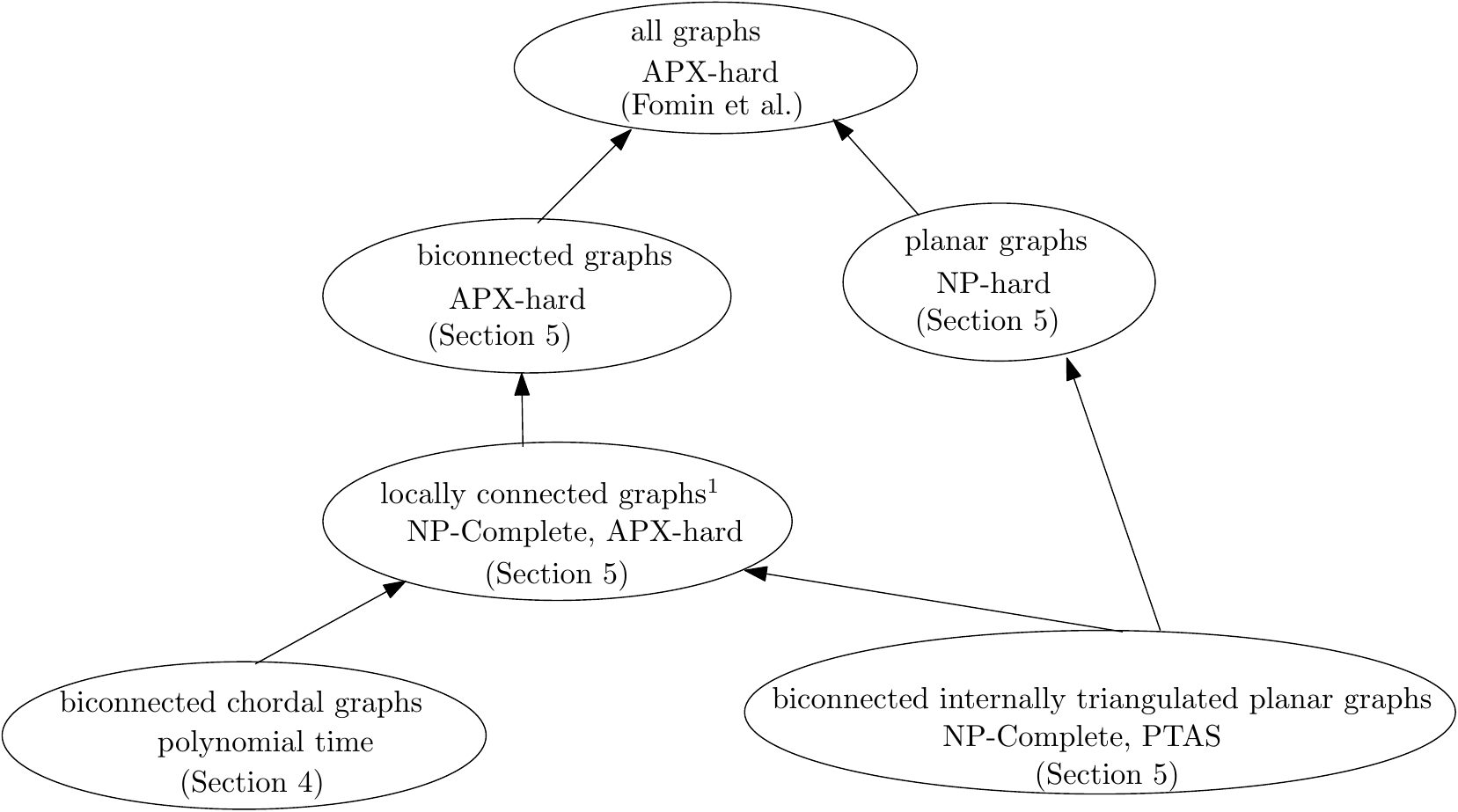}
\caption{Complexity of deciding whether $\evc(G)\le k$.}
\label{fig:complexity}
 \end{figure}
 \footnotetext[1]{All locally connected graphs are biconnected, with the exception of $K_2$.}
\section{Is the necessary condition sufficient?}
It is interesting to ask if the necessary condition stated in Theorem~\ref{thm:necessary-condn} is 
sufficient for all graphs. Here, we give a biconnected bipartite planar graph of maximum degree $4$ which answers this question in negative. 
Consider the bipartite graph 
$G(X \cup Y,E)$ with $X=\{x_1,x_2,x_3,x_4,x_5\}$ and $Y=\{y_1,y_2,y_3,y_4,y_5\}$ shown in Fig.~\ref{fig:fig_contradiction_bicon}. 
This graph consists of two copies of $K_{2,3}$
on vertex sets $\{x_1,x_2,x_3,y_4,y_5\}$ and $\{y_1,y_2,y_3,x_4,x_5\}$ connected by two edges $x_1 y_1$ and $x_4 y_4$.
From the figure, it can be easily seen that $\mvc(G)=5$ and it has only 
 one minimum vertex cover $S_{x_2}=\{x_1,x_2,x_3,x_4,x_5\}$ that contains $x_2$. 
 Therefore, for defending an attack on an edge incident on the vertex $x_2$, 
 the guards need to move to the configuration $S_{x_2}$. 
 In this configuration, when there is an attack on the edge $x_5y_5$, $G$ has to move to a configuration containing $y_5$. 
 The only minimum vertex covers of $G$ containing $y_5$ are 
 $S_1=\{y_1,y_2,y_3,y_4,y_5\}$, $S_2=\{x_4,x_5,y_4,y_5,x_1\}$ and $S_3=\{x_4,x_5,y_4,y_5,y_1\}$. 
 Since the edge $x_5y_5$ does not belong to any maximum matching of $G$, a transition from $S_{x_2}$ to
 $S_1$ is not legal. Configurations $S_2$ and $S_3$ both contain $x_5$. Following the attack on $x_5y_5$ in configuration $S_{x_2}$, 
 when the guard on $x_5$ moves to $y_5$, 
 no other guard can move to $x_5$, because no neighbor of $x_5$ is occupied in $S_{x_2}$. 
 Thus, transitions to $S_2$ and $S_3$ are also not legal. 
 Hence, the attack on $x_5y_5$ cannot be handled and therefore $\evc(G)\neq \mvc(G)$.  
 
\begin{figure}[h]
\centering 
\includegraphics[width=4.5cm,height=4.5cm]{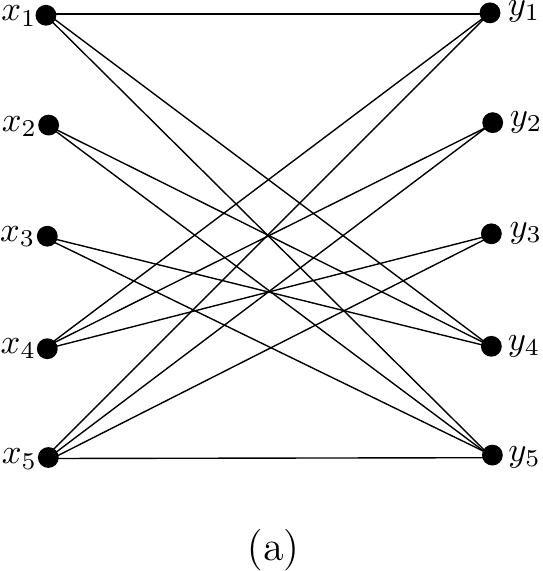}
\hspace*{.7cm}
\includegraphics[width=6cm,height=4.5cm]{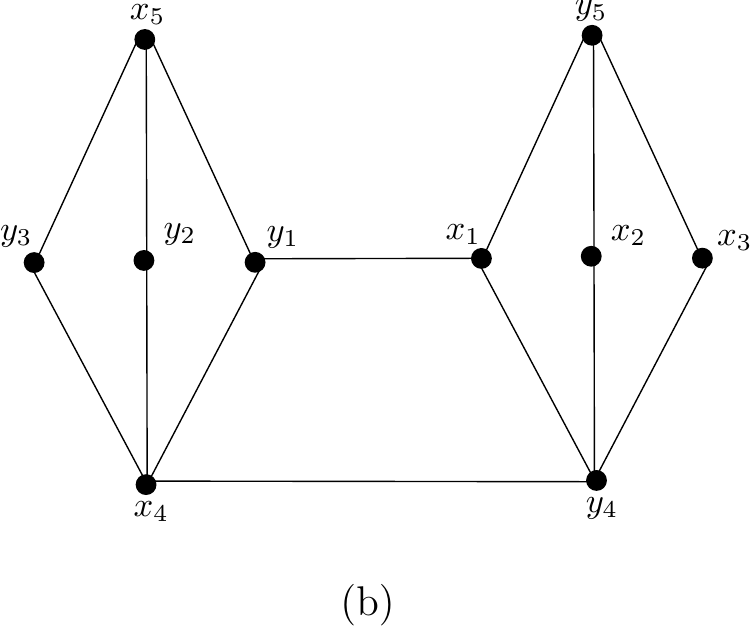}
\caption{(a) A biconnected bipartite planar graph with all vertices in some minimum vertex cover and $evc(G) \neq mvc(G)$ (b) A planar drawing of the same graph.}
\label{fig:fig_contradiction_bicon}
 \end{figure}
 This example shows that the necessary condition is not sufficient for planar graphs or bipartite graphs, even when they are biconnected. 
\section{A graph $G$ with an edge not contained in any maximum matching but $\mvc(G)=\evc(G)$}
\begin{figure}[h]
\centering
\includegraphics[scale=0.6]{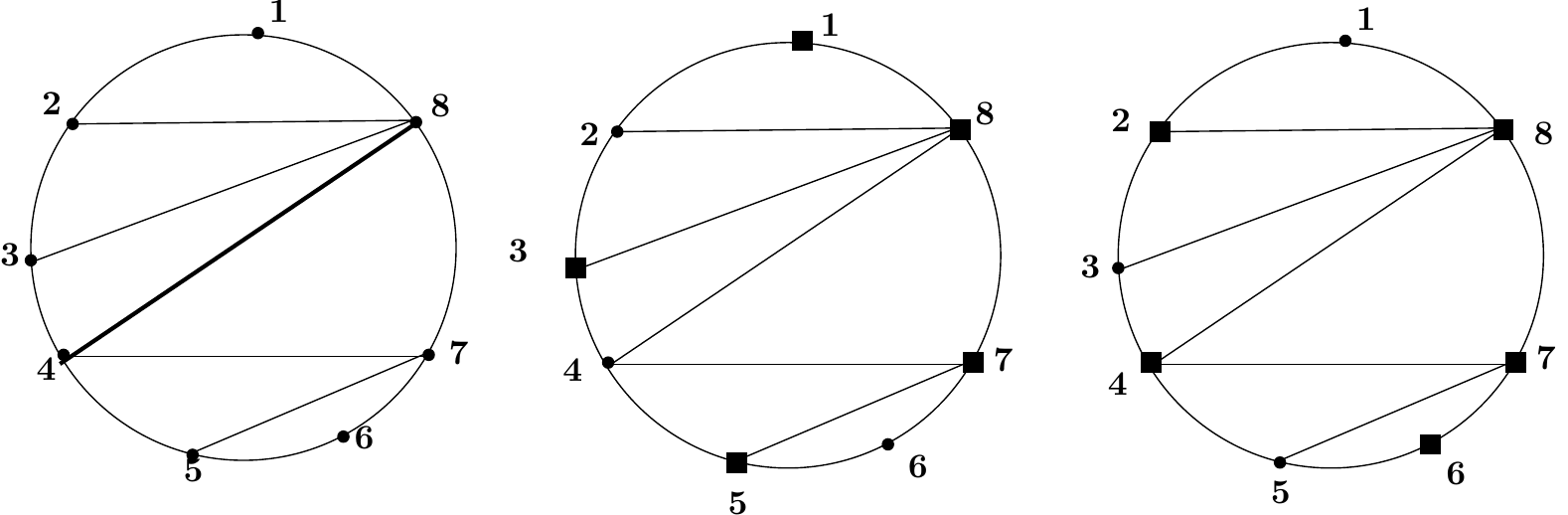}
\caption{$\mvc(G)=\evc(G)=5$ and size of maximum matching is 4. Edge $(8,4)$ is not contained in any maximum matching. }
\label{fig:fig_contradiction}
 \end{figure}
Klostermeyer et al.~\cite{Klostermeyer2009} proved that if a graph $G$ has two disjoint minimum vertex covers and each edge is contained in a maximum matching then 
$\mvc(G)=\evc(G)$. They had asked if $\mvc(G)=\evc(G)$, is it necessary that for every edge $e$ of $G$ there is a maximum matching of $G$ that contains $e$.
Here, we give a biconnected chordal graph $G$ for which the answer is negative.
The graph $G$ shown in Fig.~\ref{fig:fig_contradiction} has $\mvc(G)=5$, a maximum matching of size 4 and the edge $(8,4)$ not contained in any maximum matching. 
It can be shown that $\evc(G)=5$ because $G$ has an evc class with two configurations, $S_1=\{1,8,3,5,7\}$ and $S_2=\{6,8,2,4,7\}$.
 
Hence, even for a graph class $\mathcal{H}$ such that for all $G \in \mathcal{H}$, $\evc(G) \le \mvc(G)+1$,
there could be a graph $G \in \mathcal{H}$ with $\mvc(G)=\evc(G)$ and an edge not present in any 
maximum matching of $G$.
\section{Conclusion and open problems}
This paper presents an attempt to derive a characterization of graphs for which the eternal vertex cover number coincides with the vertex cover number.
A characterization that works for a graph class that includes chordal graphs and internally triangulated planar graphs is obtained. 
The characterization is derived from a simple to state necessary condition; and has several implications, including 
a polynomial time algorithm for deciding whether a chordal graph $G$ has $\evc(G)=\mvc(G)$ and a polynomial time algorithm for computing eternal vertex
cover number of biconnected chordal graphs. It would be interesting to study the complexity of eternal vertex cover problem of chordal graphs. 
A characterization that works for bipartite graphs also remains open. 

The characterization also leads to NP-completeness results for some graph
classes like locally connected graphs and biconnected internally triangulated planar graphs. 
Even though it was known that the general problem is NP-hard, 
to the best of our knowledge, results obtained here are the first NP-completeness results known for the eternal vertex cover problem.
\newpage

\end{document}